\documentclass[12pt]{article}
\emergencystretch=15pt  

\usepackage[mathscr]{eucal}
\usepackage{amssymb,latexsym}
\usepackage{verbatim}
\usepackage{graphicx}
\usepackage{amsmath}
\usepackage{amsthm}
\usepackage{enumerate}
\usepackage{authblk}

\usepackage{xcolor}

\usepackage{appendix}

\usepackage[normalem]{ulem}

\setlength{\textwidth}{6in}
\setlength{\oddsidemargin}{.27in}
\setlength{\evensidemargin}{.27in}
\setlength{\topmargin}{-.3in}
\setlength{\textheight}{8.5in}


\newtheorem{thm}{Theorem}[section]
\newtheorem{lem}[thm]{Lemma}
\newtheorem{Def}{Definition}[section]
\newtheorem{prop}[thm]{Proposition}
\newtheorem{cor}[thm]{Corollary}

\newcommand\calD{{\mathcal{D}}}

\newcommand\calK{{\mathcal{K}}}



\newcommand\bbR{{\mathbb R}}
\newcommand\bbZ{{\mathbb{Z}}}

\newcommand\ol{\overline}

\renewcommand\S{\Sigma}

\renewcommand\d{\partial}

\renewcommand\b{\beta}

\newcommand\ric{{\rm Ric}}

\newcommand\g{\gamma}

\renewcommand\a{\alpha}

\newcommand\beq{\begin{equation}}
\newcommand\eeq{\end{equation}}
\newcommand\ben{\begin{enumerate}}
\newcommand\een{\end{enumerate}}
\newcommand\bit{\begin{itemize}}
\newcommand\eit{\end{itemize}}







\newcommand{\R}{\mathbb R}

\newcommand{\J}{\mathcal{J}}

\newcommand{\ov}{\overline}


\newcommand{\pd}{\partial}

\DeclareFontFamily{OT1}{rsfs}{} \DeclareFontShape{OT1}{rsfs}{m}{n}{ <-7> rsfs5 <7-10> rsfs7 <10-> rsfs10}{}
\DeclareMathAlphabet{\mycal}{OT1}{rsfs}{m}{n}

\DeclareFontFamily{OT1}{rsfs}{} \DeclareFontShape{OT1}{rsfs}{m}{n}{ <-7> rsfs5 <7-10> rsfs7 <10-> rsfs10}{}
\DeclareMathAlphabet{\mycal}{OT1}{rsfs}{m}{n}

\newcounter{mnotecount}

\setcounter{equation}{0}

\title{A conformal infinity approach to asymptotically $\text{AdS}_2\times S^{n-1}$ spacetimes}

\author[1]{Gregory J. Galloway}
\author[2]{Melanie Graf\,}
\author[3]{Eric Ling}

\affil[1]{University of Miami
}

\affil[2]{University of Washington}

\affil[3]{KTH, Stockholm}

\begin{document}
\date{}
\maketitle

\vspace{.1in}

\begin{abstract} 
It is well known that the spacetime  $\text{AdS}_2\times S^2$ arises as the `near horizon' geometry of the extremal Reisser-Nordstrom solution, and for that reason it has been studied in connection with the AdS/CFT correspondence. Motivated by a conjectural viewpoint of Juan Maldacena, the authors in \cite{GalGraf} studied the rigidity of asymptotically $\text{AdS}_2\times S^2$ spacetimes satisfying the null energy condition. In this paper, we take an entirely different and more general approach to the asymptotics based on the notion of conformal infinity. This involves a natural modification of the usual notion of timelike conformal infinity for asymptotically anti-de Sitter spacetimes.  As a consequence we are able to obtain a variety of new results, including similar results to those in \cite{GalGraf} (but now allowing both higher dimensions and more than two ends) and a version of topological censorship.
\vspace{.2in}

\end{abstract}

\tableofcontents

\section{Introduction}

It is a well known fact that the spacetime ${\rm AdS}_2 \times S^2$ arises as the `near horizon' geometry of the extremal Reissner-Nordstrom solution.  In fact, ${\rm AdS}_2 \times S^2$,  and also ${\rm AdS}_2 \times S^3$,  appear 
in uniqueness results for near-horizon supersymmetric solutions of minimal supergravity, and in 
near-horizon solutions in Einstein-Maxwell-Chern-Simons theory in four and five dimensions;  see especially the review article \cite{KL13}.    
${\rm AdS}_2 \times S^2$ has also been discussed in the context of string theory and the AdS/CFT correspondence, as for example in \cite{MMS}.  Based on certain examples in \cite{MMS}, and other considerations, Maldacena \cite{MaldPC} recently suggested that spacetimes which satisfy the null energy condition (or average null energy condition), and which  asymptote to ${\rm AdS}_2 \times S^2$ at infinity should be quite `rigid'.  
Following this suggestion, in \cite{GalGraf}, the authors studied the rigidity of asymptotically ${\rm AdS}_2 \times S^2$ spacetimes satisfying the null energy condition.  

Among the results obtained, we showed that such spacetimes admit two transverse foliations by totally geodesic null hypersurfaces, each extending from one end to the other, the intersections of which give rise to a foliation of spacetime by totally geodesic round $2$-spheres.   These are standard features of  ${\rm AdS}_2 \times S^2$. However, without imposing some stronger condition, we were unable to 
conclude that such a spacetime is actually isometric to ${\rm AdS}_2 \times S^2$.  In fact, using the Newman-Penrose formalism, Tod \cite{Tod1, Tod2} constructed examples of asymptotically ${\rm AdS}_2 \times S^2$ spacetimes satisfying the null energy condition, having the structural properties established in \cite{GalGraf}, which are not isometric to ${\rm AdS}_2 \times S^2$. Some further examples will be given in the present paper.  However, in the presence of  certain field equations, the only possibility seems to be ${\rm AdS}_2 \times S^2$.

The approach to the asymptotics taken in \cite{GalGraf} was to require that on each of two ``external" spacetime regions, the spacetime metric $g$ asymptotes at a precise rate, with respect to a well chosen coordinate system,   to the $AdS_2 \times S^2$ metric, on approach to infinity.  These asymptotic conditions were also supplemented by certain causal theoretic conditions on the complement of  the external regions.  The asymptotic analysis in \cite{GalGraf}, while fairly technical, gave very precise control over the causal and geometric properties of asymptotically ${\rm AdS}_2 \times S^2$ spacetimes, defined in this manner.  

In this paper, we take an entirely different and more general approach to the asymptotics, based on the notion of conformal infinity.  In the next section we define what it means for a spacetime to have an `asymptotically ${\rm AdS}_2 \times S^{n-1}$ end', in terms of it admitting a certain type of timelike conformal boundary.  The approach is based on the simple observation, spelled out in the next section, that ${\rm AdS}_2 \times S^{n-1}$ conformally embeds in a natural way into the Einstein static universe, ($\bbR \times S^n, -dt^2 + d\omega_n^2)$.   Via this embedding, ${\rm AdS}_2 \times S^{n-1}$ acquires two topological boundary components, namely  two $t$-lines, through, say, the north and south pole of $S^n$. 
Thus, as made precise in Definition \ref{maindef}, a spacetime $(M^{n+1},g)$ has an asymptotically ${\rm AdS}_2 \times S^{n-1}$ end if it conformally embeds into a globally hyperbolic spacetime $(\ov{M}^{n+1},\ov{g})$, in which the conformal boundary consists of a smooth  inextendible timelike curve $\J$.  This situation, of course differs, from the standard definition of the conformal boundary of an asymptotically anti-de Sitter spacetime, in which the conformal boundary is a timelike hypersurface.  Moreover, in our definition, the conformal factor will not be smooth at $\J$, but rather will only be continuous (in fact Lipschitz) there.  Such is the case with the embedding of ${\rm AdS}_2 \times S^{n-1}$ into the Einstein static universe. 

As a consequence of the approach taken here, we are able to obtain a variety of results, including results similar to some of those in \cite{GalGraf} (but in this higher dimensional setting),  in which the analysis is substantially simplified and more causal theoretic in nature.  By our conformal boundary-based definition, a spacetime can have arbitrarily many ${\rm AdS}_2 \times S^{n-1}$ ends.  But when the null energy condition is assumed to hold, it is shown that spacetime $(M,g)$ can have at most two ${\rm AdS}_2 \times S^{n-1}$ ends.  Further, for spacetimes  $(M^{n+1},g)$ which satisfy the null energy condition and which have two communicating  ${\rm AdS}_2 \times S^{n-1}$ ends, it is then shown that 
(i) $(M,g)$ has spatial topology that of an $n$-sphere minus two points, (ii) $(M,g)$ admits two transverse foliations by totally geodesic null hypersurfaces, and (iii) the intersections of these foliations give rise to a foliation of spacetime by totally geodesic $n-1$ spheres (not necessarily round).  Hence, under the present asymptotic assumptions,  (ii) and (iii) extend results in \cite{GalGraf} to this higher dimensional situation. A result on topological censorship for such spacetimes is also obtained. 

While the idea behind Definition \ref{maindef} comes from exact $\text{AdS}_2 \times S^2$ and our main examples are $\text{AdS}_2\times S^2$ and related spacetimes, this definition can also be read more generally as merely describing the notion of a  one-dimensional conformal  timelike infinity and  
it is perhaps possible that this new version of a `singular' timelike conformal boundary can be extended in some manner to other situations.

\bigskip
The paper is organized as follows: In Section \ref{Def sec} we define what we mean by a smooth spacetime having $k$ asymptotically $\text{AdS}_2\times S^{n-1}$ ends. Since this definition involves non-smooth metrics, we include an appendix on low-regularity causality theory (Appendix \ref{app:lip}). We then proceed to use the definition to derive basic properties of null geodesics and give a brief overview of certain classes of examples (which are discussed in more detail in Appendix \ref{app:examples}).

In Section \ref{sec:two ends}, we look at spacetimes satisfying the null energy condition with exactly two asymptotically $\text{AdS}_2\times S^{n-1}$ ends and derive the existence of totally geodesic null hypersurfaces and of a foliation by $(n-2)$-spheres. These results are similar to the ones in \cite{GalGraf}. Next, in Section \ref{sec:more ends}, we show that if the null energy condition holds $(M,g)$ can have at most two asymptotically $\text{AdS}_2\times S^{n-1}$ ends. 

Finally, we prove a version of topological censorship for spacetimes with $k$ asymptotically $\text{AdS}_2\times S^{n-1}$ ends in Section \ref{topcen}.

\section{Definition and basic properties}\label{Def sec}

${\rm AdS}_2 \times S^{n-1}$ can be expressed globally as the manifold $M = \R \times (0,\pi) \times S^{n-1}$,
with metric 
\begin{align}
 g &= \frac{1}{\sin^2x}(-dt^2 + dx^2) +  d \omega^2_{n-1}   \nonumber \\
 & = \frac{1}{\sin^2x}\left(-dt^2 + dx^2 + \sin^2x\,  d \omega^2_{n-1}\right)
\end{align}
where $d \omega^2_{n-1}$ is the round unit sphere metric on $S^{n-1}$.  Note that the metric  within the parentheses above is the metric $\ov{g}$ of the Einstein static universe defined on $\ov{M} = \bbR \times S^{n}$.

Hence, in this simple manner, we see that $(M,g)$ conformally embeds into the Einstein static universe  $(\ov{M}, \ov{g})$, with $g = \frac1{\Omega^2} \ov{g}$ and $\Omega = \sin x$. Further, $(\ov{M}, \ov{g})$ is a spacetime extension of $(M,\Omega^2g)$, with the latter missing two timelike lines $\R \times \{n\}$ and $\R \times \{s\}$, where $n,s$  are antipodal points on $S^{n}$.  By defining $\Omega = 0$ on these timelike lines,  $\Omega$ extends to a Lipschitz function on $\ov{M}$.  Although $d\Omega$ is not defined on these timelike lines, it remains bounded on approach to them.  One is led to think of each of these timelike lines as representing a lower dimensional timelike conformal infinity.  This situation motivates the following definition.\footnote{In this paper all manifolds are smooth.}

\medskip

\begin{Def} \label{maindef}
{\rm
 A smooth spacetime  $(M^{n+1},g)$ is said to have $k$ \emph{asymptotically $\text{AdS}_2\times S^{n-1}$ ends} (where $k \in \{1, 2, \dotsc, \infty\}$) provided there exists a spacetime $(\ov{M}, \ov{g})$, where $\ov{g}$ is a $C^{0,1}$ metric, and a function 
$\Omega \in C^{0,1}(\ov{M})$ such that the following hold:}
\ben
\item[{\rm(i)}] {\rm$(\ov{M}, \ov{g})$ is  globally hyperbolic.} 
\item[{\rm(ii)}] {\rm$M \subset \ov{M}$ and $\d M = \J$, where $\mathcal{J}$ is the disjoint union $\J = \bigsqcup_{i = 1}^k \J_i$ where each $\J_i$ is a smooth inextendible timelike curve in $\ov{M}$.}
\item [{\rm(iii)}]\label{omegabbd} 
{\rm$\Omega |_M \in C^{\infty}(M)$, $\Omega > 0$ on $M$ and $\Omega = 0$ on $\J$.  Further, suppose that for any point $p\in \J$ there exists a neighborhood $U$ of $p$ such that $U\cap \J$ is connected and $d\Omega$ remains bounded on 
$U\setminus \J$.\footnote{Technically boundedness of $d\Omega$ is already implied by assuming that $\Omega$ is Lipschitz on $\ov{M}$.}}
\item[{\rm(iv)}]  {\rm On $M$, $\ov{g} = \Omega^2 g$.  (In particular, $\ov{g}$ is smooth on $M$.)}
\een
\end{Def}

For example, in the sense of this definition, $\text{AdS}_2\times S^{n-1}$ has two asymptotically $\text{AdS}_2\times S^{n-1}$ ends. 

\smallskip

{\it Remark.} Note that $\J$ has to be closed: Any finite union of inextendible timelike curves in a strongly causal spacetime must be closed. This is no longer true for countably infinite unions; nevertheless, even in the countably infinite case, closedness of $\J $ is ensured by point (ii) in the definition. Also, (iii) implies that any $\J_i \subset \J=\bigcup_{l\in I} \J_l$ has a neighborhood $U$ in $\ov{M}$ that doesn't intersect any other $\J_l$, i.e., $U\cap \J_l=\emptyset $ for all $l\neq i$.

\smallskip

We also point out that in the standard treatment of conformal infinity, where $\Omega$ is smooth, with $\Omega =0$ and $d \Omega \ne 0$ on   $\J$, condition (iii)  is trivially satisfied.  We will use condition (iii) in conjunction with the following result proved in \cite{Graf}.
\begin{lem}\label{bddderivative}
	Let $(M,g)$ be a spacetime with Lipschitz metric. Assume $g$ is $C^1$ on some open subset $O\subset M$ and $\gamma :[0,1)\to O$ is a causal curve that is a solution of the geodesic equation. If $\gamma $ is continuously extendible to $p:=\gamma(1)\in \partial O$ then $\dot{\gamma}$ remains bounded on $[0,1)$.
\end{lem}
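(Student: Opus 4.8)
The plan is to reduce everything to one fixed coordinate chart around the endpoint $p$ and to bound there the coordinate components of $\dot\gamma$, which is equivalent to bounding $\dot\gamma$ with respect to any smooth background metric. Since $p\in M$, choose coordinates $(x^0,\dots,x^n)$ on a neighbourhood $V$ of $p$ with $g_{\mu\nu}(p)=\eta_{\mu\nu}$; shrinking $V$, continuity of $g$ gives $\|g_{\mu\nu}(x)-\eta_{\mu\nu}\|\le\epsilon$ on $V$ for a prescribed small $\epsilon>0$, and the Lipschitz property of $g$ on $M$ bounds the partial derivatives $\partial_\lambda g_{\mu\nu}$ (which exist a.e.\ on $V$, and everywhere on $V\cap O$ since $g$ is $C^1$ there) by the Lipschitz constant $L$. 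Hence the Christoffel symbols of $g$ are \emph{genuinely bounded}, $|\Gamma^\mu_{\alpha\beta}|\le C$, on the set $V\cap O$, even near $\partial O$; this is the one place where the two hypotheses on $g$ interact. Because $\gamma$ has values in $O$ and is continuously extendible to $p$, there is $s_*<1$ with $\gamma([s_*,1))\subset V\cap O$, and on $[s_*,1)$ the geodesic equation $\ddot\gamma^\mu=-\Gamma^\mu_{\alpha\beta}(\gamma)\dot\gamma^\alpha\dot\gamma^\beta$ holds classically, so $\gamma\in C^2$ there and $|\ddot\gamma^\mu|\le C'\,|\dot\gamma|^2$, where $|\dot\gamma|^2:=\sum_\nu(\dot\gamma^\nu)^2$ is the coordinate Euclidean norm.

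Next I would use causality to reduce to bounding a single scalar. From $g_{\mu\nu}(\gamma)\dot\gamma^\mu\dot\gamma^\nu\le0$ and $\|g-\eta\|\le\epsilon$ one obtains $(1-\epsilon)\sum_{i\ge1}(\dot\gamma^i)^2\le(1+\epsilon)(\dot\gamma^0)^2$; in particular $\dot\gamma^0$ never vanishes on $[s_*,1)$ (otherwise $\dot\gamma$ would vanish, which is impossible for a causal curve), so after possibly replacing $x^0$ by $-x^0$ we may take $u:=\dot\gamma^0>0$ on $[s_*,1)$, and then $|\dot\gamma|$ is bounded if and only if $u$ is. Since $\gamma$ extends continuously to $p$, the coordinate $\gamma^0(s)$ converges as $s\to1$, so $\int_{s_*}^1 u(s)\,ds=\gamma^0(p)-\gamma^0(s_*)<\infty$. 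Substituting the two displayed estimates into the geodesic equation gives $|u'|=|\ddot\gamma^0|\le K u^2$ on $[s_*,1)$ for a suitable constant $K$.

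It then suffices to prove the elementary fact that a positive $C^1$ function $u$ on $[s_*,1)$ with $|u'|\le Ku^2$ and $\int_{s_*}^1 u<\infty$ is bounded. Indeed, if not, pick $t_j\uparrow1$ with $u(t_j)\to\infty$; since $|(1/u)'|=|u'|/u^2\le K$, on the interval $[t_j-\delta_j,\,t_j]$ with $\delta_j:=(2Ku(t_j))^{-1}$ one gets $1/u\le \tfrac32\,(u(t_j))^{-1}$, i.e.\ $u\ge\tfrac23 u(t_j)$, whence $\int_{t_j-\delta_j}^{t_j}u\ge(3K)^{-1}$. Passing to a subsequence for which these (shrinking) intervals are pairwise disjoint and contained in $[s_*,1)$ then forces $\int_{s_*}^1 u=\infty$, a contradiction. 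Hence $u=\dot\gamma^0$ is bounded near $1$, so $\sum_{i\ge1}(\dot\gamma^i)^2$ is bounded there too, so $|\dot\gamma|$ is bounded on $[s_*,1)$; being trivially bounded on the compact interval $[0,s_*]$, it is bounded on all of $[0,1)$.

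I do not expect a serious obstacle here: the proof is a packaging of (a) the observation that, because $\gamma$ never leaves $O$, the Lipschitz bound on $g$ becomes a pointwise bound on $\Gamma$ along $\gamma$ even as $\gamma$ approaches $\partial O$; (b) causality forcing the spatial speed to be controlled by the time speed, whose integral is finite precisely because $\gamma$ has a limit point; and (c) a Riccati-type inequality ruling out blow-up against integrability. The only mildly delicate points are checking that a solution of the geodesic equation with merely continuous (but bounded) Christoffel symbols is $C^2$, and that the sign of $\dot\gamma^0$ cannot degenerate along $\gamma$ --- both immediate once one recalls that $\dot\gamma\neq0$ throughout for a causal curve.
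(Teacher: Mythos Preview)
The paper does not actually prove this lemma; it merely states it and attributes the proof to the reference \cite{Graf}. So there is no in-paper argument to compare against.

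That said, your argument is sound. The three ingredients you isolate are exactly the right ones: (a) the global Lipschitz bound on $g$ transfers to a uniform bound on the Christoffel symbols on $V\cap O$ (where they exist pointwise because $g$ is $C^1$ there), so the geodesic equation gives $|\ddot\gamma|\le C|\dot\gamma|^2$ all the way up to the endpoint; (b) causality together with $g\approx\eta$ collapses the vector estimate to a scalar one for $u=\dot\gamma^0$, and continuous extendibility of $\gamma$ makes $\int u$ finite; (c) the Riccati-type inequality $|u'|\le Ku^2$ combined with $\int u<\infty$ forces $u$ bounded. Your proof of (c) via the Lipschitz bound on $1/u$ and the disjoint-interval trick is clean and correct. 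The only cosmetic point is that the constant in the causality estimate $(1-\epsilon)\sum_{i\ge 1}(\dot\gamma^i)^2\le(1+\epsilon)(\dot\gamma^0)^2$ should carry a dimension-dependent factor coming from $|h_{\mu\nu}\dot\gamma^\mu\dot\gamma^\nu|\le (n+1)\epsilon\,|\dot\gamma|^2$, but this is absorbed by choosing $\epsilon$ smaller and changes nothing.
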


The following proposition extends to the present setting a basic result for spacetimes with conformal infinity in the conventional sense.

\begin{prop}\label{complete prop}
Suppose $(M,g)$ has an asymptotically $\emph{\text{AdS}}_2 \times S^2$ end $\J$. Let $\g:(0,a) \to M$ be a future directed $\ol{g}$-null geodesic with past end point $p \in \J$. Then $\g$ is past null complete as a $g$-geodesic.

\end{prop}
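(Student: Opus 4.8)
The plan is to exploit the conformal relationship $\ov g = \Omega^2 g$ together with the standard behavior of null geodesics under conformal change. Since $\g$ is an $\ov g$-null geodesic, it is also a $g$-null \emph{pregeodesic}: there is a reparametrization $s \mapsto \g(s)$, with affine parameter $s$ for the $g$-metric, making $\g$ a genuine $g$-null geodesic. The two affine parameters are related by $ds/d\lambda = c\,\Omega^{-2}$ along $\g$ for some constant $c > 0$ (this is the usual formula for how affine parameters transform under a conformal rescaling $g = \Omega^{-2}\ov g$, with the null condition eliminating the gradient term). Past null completeness as a $g$-geodesic means precisely that $s \to -\infty$ (or $+\infty$, depending on orientation) as $\lambda \to 0^+$, i.e.\ that $\int_0^{a'} \Omega^{-2}\,d\lambda$ diverges as the lower limit is sent to $0$. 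So the entire problem reduces to showing that $\Omega \to 0$ fast enough along $\g$ near $p$; it suffices to show $\Omega(\g(\lambda)) = O(\lambda)$, since then $\Omega^{-2} \gtrsim \lambda^{-2}$, which is non-integrable at $0$.

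The estimate $\Omega(\g(\lambda)) = O(\lambda)$ is where conditions (iii) and Lemma \ref{bddderivative} come in. Work in the neighborhood $U$ of $p$ provided by (iii), on which $U \cap \J$ is connected and $d\Omega$ is bounded on $U \setminus \J$, say $|d\Omega|_{\ov g} \le C$ there (with respect to some fixed background Riemannian metric / chart). Since $\g$ is an $\ov g$-geodesic and $\ov g$ is $C^1$ on the open set $\ov M \setminus \J$ (indeed smooth on $M$; one should note $\g$ enters $M$ immediately after leaving $p$, so on $(0,a)$ it lies in $M$ where $\ov g$ is smooth), Lemma \ref{bddderivative} applies to $\g$ on a small initial interval $[0,\e) \to \ov M \setminus \J$ continuously extended to $p \in \d(\ov M \setminus \J)$: it gives that $\dot\g = \g'$ (the $\ov g$-affine velocity) remains bounded on $(0,\e)$. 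Then along $\g$,
\[
\left|\frac{d}{d\lambda}\,\Omega(\g(\lambda))\right| \;=\; |d\Omega(\g'(\lambda))| \;\le\; C'\,|\g'(\lambda)| \;\le\; C'',
\]
so $\Omega \circ \g$ is Lipschitz in $\lambda$ on $(0,\e)$; since $\Omega(\g(\lambda)) \to \Omega(p) = 0$ as $\lambda \to 0^+$, we get $\Omega(\g(\lambda)) \le C''\lambda$ on $(0,\e)$, as desired.

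Assembling the pieces: on $(0,\e)$ we have $\Omega^{-2} \ge (C'')^{-2}\lambda^{-2}$, hence $\int_0^{\e} \Omega(\g(\lambda))^{-2}\,d\lambda = +\infty$, so the $g$-affine parameter $s$ runs off to infinity as $\lambda \to 0^+$. Therefore $\g$, reparametrized as a $g$-geodesic, is inextendible as $\lambda \to 0^+$ in finite $g$-affine time, i.e.\ it is past $g$-null complete. I expect the main obstacle to be the bookkeeping around Lemma \ref{bddderivative}: one must be slightly careful that the hypotheses of that lemma are met — in particular that $\g$ restricted to a small initial interval is a causal $\ov g$-geodesic lying in the open set where $\ov g$ is $C^1$ (which holds since that interval lies in $M$ by (ii) and (iv)) and is continuously extendible to the boundary point $p$ (true by hypothesis), and that "$d\Omega$ bounded near $\J$" can be combined with a bound on $|\g'|$ in a fixed chart to bound $\frac{d}{d\lambda}(\Omega\circ\g)$. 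The conformal-affine-parameter computation itself is routine and I would not belabor it.
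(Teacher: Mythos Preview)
Your proposal is correct and follows essentially the same approach as the paper: use the conformal transformation law $ds/d\bar s = c\,\Omega^{-2}$ for null affine parameters, then combine condition (iii) (boundedness of $d\Omega$ near $\J$) with Lemma \ref{bddderivative} (boundedness of $\dot\g$) to control the growth of $\Omega$ along $\g$ and conclude that $\int_0 \Omega^{-2}\,d\bar s$ diverges. The only cosmetic difference is that the paper bounds $(\Omega^2)' = 2\Omega\,d\Omega(\dot\g)$ to get $\Omega^2(\g(\bar s)) \le A\bar s$ and hence a logarithmic divergence $\int 1/(A\bar s)\,d\bar s$, whereas you bound $(\Omega)'$ directly to get $\Omega(\g(\lambda)) \le C''\lambda$ and a stronger quadratic divergence $\int 1/(C''\lambda)^2\,d\lambda$; either estimate suffices.
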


\proof 
Let $\g \colon (0,a) \to M$ be our $\bar g$-null geodesic with $\ov{g}$-affine parameter $\ov{s}$. By assumption, $\g$ extends continuously to $\g(0) = p$.  It is a standard fact  \cite{Wald} that  $\g|_M$ is a null $g$-geodesic with $g$-affine parameter $s$ satisfying $ds/d\ov{s} = c\Omega^{-2}$ for some constant $c$. Then, fixing $b \in (0,a)$,  $\g$ is past complete with respect to $g$ provided the integral
\begin{equation}\label{past complete integral}
\int_0^b \frac{1}{f(\bar s)}d\ov{s}
\end{equation}
diverges,  where $f(\bar s) = \Omega^2(\g(\bar s))$.  We have $f(\bar s) > 0$ for  $\bar s \in (0,a)$ and $f(0) = 0$.  Moreover, Lemma \ref{bddderivative} and condition (iii) imply that
there exists $A > 0$ such that $f'(\bar s)=2\Omega(\g(\bar s))d\Omega(\dot{\g}(\bar s)) \le A$ for all $\bar s \in  (0,a)$.  This implies that $f(\bar s) \le A \bar s$ for all $\bar s \in  (0,a)$, from which it follows that \eqref{past complete integral} diverges.\qed

\medskip
\noindent{{\bf Examples.}
We conclude this section by describing several examples of spacetimes with asymtotically ${\rm AdS}_2 \times S^2$ ends in the sense of of Definition \ref{maindef}.  

As a first example, let $(\ol{M}, \ol{g})$ be $n+1$ dimensional Minkowski space with standard coordinates $(x^0 = t, x^1,\cdots,x^n)$.  Let $M = \ol{M} \setminus \{t- \text{axis}\}$, with metric 
$$
g = \frac{\ol{g}}{\Omega^2}
$$ 
where $\Omega \in C^{\infty}(M)$, $\Omega > 0$, and near the $t$-axis, $\Omega = |\vec{x}| = \sqrt{\sum_{i=1}^n (x^i)^2}$.  Then $(M,g)$ has exactly one asymptotically ${\rm AdS}_2 \times S^{n-1}$ end, and $\cal{J} = $ the $t$-axis.  By removing other $t$-lines, this example shows that there are spacetimes with countably many asymptotically 
${\rm AdS}_2 \times S^{n-1}$ ends.   However, as shown in Section 4, if one assumes that the null energy condition (NEC) holds, there can be at most two asymptotically ${\rm AdS}_2 \times S^{n-1}$ ends.  In the appendix of \cite{Tod1}, Paul Tod presents an interesting example of a four dimensional spacetime with exactly one asymptotically ${\rm AdS}_2 \times S^{2}$ end, which, unlike the example above, satisfies then NEC. As described in \cite{Tod1},  this example is a solution of the Einstein equations, with source term the sum of a charged dust and an electomagnetic field.  

In \cite{Tod2} Tod presents a class of examples with metric of the form 
\beq
g = \frac{e^{-2f(t,x)}}{\sin^2x} (-dt^2 + dx^2)  + d\omega_2^2  \,,
\eeq
which satisfy the NEC and the asymptotic conditions assumed in \cite{GalGraf}.  For suitable choices of $f$, these are  examples of spacetimes satisfying the NEC with two asymptotically ${\rm AdS}_2 \times S^2$ ends, as defined here.

Next, we consider examples of the following type:  $M = \bbR^2 \times S^{n-1}$, with metric,
\beq\label{rcoords}
g = \underbrace{-f(r)dt^2 + \frac1{f(r)} dr^2}_{g_1}  +  \, \underbrace{d \omega^2_{{n-1}_{{}_{{}_{}}}}}_{g_2}
\eeq
where $f \in C^{\infty}(\bbR)$,  $f > 0$. In appendix \ref{app:examples} we show ${\rm AdS}_2 \times S^{n-1}$ corresponds to the choice $f(r) = r^2 +1$.  
We also find conditions on $f(r)$ which ensure that $(M,g)$ has two asymptotically ${\rm AdS}_2 \times S^{n-1}$ ends in the sense of Definition \ref{maindef}.

The Ricci tensor of $(M,g)$ is given by,
\beq\label{eq:ric}
\ric = (Kg_1) \oplus g_2  \,,
\eeq
where $K$ is the Gaussian (i.e. sectional) curvature of the $t$-$r$ plane.  A computation shows,\beq\label{eq:sec}
K = -\frac12 \frac{\d^2 f}{\d r^2}  \,.
\eeq

We wish to consider circumstances under which the NEC holds. Let $\{e_0 = \d_t/|\d_t|, e_1 =  \d_r/|\d_r|, e_2, \cdots, e_n\}$  be an orthonormal basis for $T_pM$.  Then, it follows from \eqref{eq:ric} and \eqref{eq:sec}, that, for any null vector $X = \sum_{i = 0}^n X^i e_i \in T_pM$, 
\begin{align}
\ric(X,X) &=(1 - K)\, (\sum_{i = 2}^n (X^i)^2) \nonumber \\
&= \left(1 + \frac12 \frac{\d^2 f}{\d r^2}\right) \, \left(\sum_{i = 2}^n (X^i)^2 \right)  \,.
\end{align}
Thus, the NEC holds at all points where $\frac{\d^2 f}{\d r^2} \ge -2$.

\smallskip
Now choose $f(r)$ as follows:
\ben
\item $f(r)$ even, $f(r) = f(-r)$, for all $r$.
\item $f$ is (weakly) concave up, i.e., $f''(r) \ge 0$ for all $r$.
\item $f(r) = 1+  r^2$ outside some interval $[-r_0, r_0]$.
\een
For such a choice (of which there are many), $(M,g)$ satisfies the NEC and is {\it exactly} $\text{AdS}_2\times S^{n-1}$ outside $[-r_0, r_0]$. 

We mention, as a last example, Schwarzschild-${\rm AdS}_2 \times S^{n-1}$, by which we mean, $M' = M \cap\{r > 0\}$, with metric \eqref{rcoords} where  $f(r)$ is given by,
\beq\label{SSAdS_2}
f(r) = 1 - \frac{2m}{r} + r^2 \,  ,
\eeq
see appendix \ref{app:examples}. Here we allow $f(r)$ to be negative as well as positive.  $(M',g)$ is an asymptotically ${\rm AdS}_2 \times S^{n-1}$ black hole spacetime, with horizon located at the single positive root $r_*$ of $f(r)$ (at which there is a coordinate singularity; see the recent review \cite{Socolovsky} for a nice treatment of this).   One checks that the NEC holds on the region $r \ge m^{\frac13}$, which includes the domain of outer communications $r > r_*$.

\section{Asymptotically $\text{AdS}_2{\bf \times S^{n-1}}$ spacetimes with two ends}\label{sec:two ends}

\subsection{Totally geodesic null hypersurfaces} \label{sec:onenullhypersurface}

\begin{prop}\label{null hypersurface prop}
Suppose $(M,g)$ has two asymptotically $\emph{\text{AdS}}_2 \times S^{n-1}$ ends $\J_1$ and $\J_2$. Let $p \in \J_1$. Suppose $J^+(p,\ov{M}) \cap \J_2 \neq \emptyset$.
Then

\begin{itemize}
\item[\emph{(1)}] there is a null curve $\eta \colon [0,1] \to \ov{M}$, contained in $\pd J^+(p, \ov{M})$, such that $\eta(0) = p$, $q :=\, \eta(1) \in \J_2$, and $\eta|_{(0,1)}$ is a complete null line in $M$,

\item[\emph{(2)}] $\pd J^+(p, \ov{M}) \,=\, \pd J^+(\eta, \ov{M})$,

\item[\emph{(3)}] if $\g \subset \pd J^+(\eta|_{(0,1)}, M)$ is past inextendible within $M$, then $\g$ has past endpoint $p$ or past endpoint $q$ 
within $\ov{M}$.

\end{itemize}
\end{prop}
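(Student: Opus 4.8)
The strategy is to argue throughout inside the globally hyperbolic spacetime $(\ov M,\ov g)$, using the low-regularity causality results of the appendix (closedness of $J^{\pm}$, existence of a Cauchy time function, strong causality and hence the absence of inextendible causal curves imprisoned in a compact set, the identity $\mathrm{int}\,J^{\pm}=I^{\pm}$ in a strongly causal spacetime, and Avez--Seifert), and to invoke Proposition \ref{complete prop} for the completeness assertion. A fact I would isolate first is that, since $\J$ is a disjoint union of curves, it has codimension $\ge 2$ in $\ov M$, so any timelike curve of $\ov M$ joining two points of $M$ can be perturbed -- keeping it timelike and fixing its endpoints -- to be disjoint from $\J$; combined with the fact that $\eta(s)\to p$, this yields the bridge between the two causal structures
\[
I^+(p,\ov M)\cap M \;=\; I^+(\eta|_{(0,1)},M).
\]

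Turning to (1): since $J^+(p,\ov M)$ is closed, $\J_2\cap J^+(p,\ov M)$ is a future ray of the inextendible timelike curve $\J_2$ with a first point $q$; minimality of $q$ and openness of $I^+(p,\ov M)$ give $q\notin I^+(p,\ov M)$, so $q\in J^+(p,\ov M)\setminus I^+(p,\ov M)=\partial J^+(p,\ov M)$. As $q$ is causally but not chronologically related to $p$, Avez--Seifert produces an achronal $\ov g$-null geodesic $\eta\colon[0,1]\to\ov M$ from $p$ to $q$, each point of which lies in $J^+(p,\ov M)\setminus I^+(p,\ov M)=\partial J^+(p,\ov M)$. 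If some $\eta(s)$, $s\in(0,1)$, lay on $\J_1$ (resp.\ $\J_2$), it would be a second point of the timelike curve $\J_1$ through $p$ (resp.\ $\J_2$ through $q$), hence chronologically comparable to $p$ (resp.\ $q$), contradicting the achronality of $\eta$ or $\eta(s)\notin I^+(p,\ov M)$; hence $\eta((0,1))\subset M$. On $M$, $\eta|_{(0,1)}$ is an achronal $g$-null geodesic (conformal invariance), and Proposition \ref{complete prop} (applied at $p$, and in time-reversed form at $q$) makes it $g$-complete in both directions -- a complete null line in $M$. Part (2) is then immediate: every point of $\eta$ is to the causal future of $\eta(0)=p$, so $J^+(\eta,\ov M)=J^+(p,\ov M)$, and the boundaries agree.

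For (3): first I would give $\g$ a past endpoint in $\ov M$. Every point of $\g\subset\partial J^+(\eta|_{(0,1)},M)$ is a limit of points of $J^+(\eta|_{(0,1)},M)\subset J^+(p,\ov M)$, so $\g\subset J^+(p,\ov M)$; hence a Cauchy time function $\tau$ of $\ov M$ satisfies $\tau\ge\tau(p)$ along $\g$ and is bounded above on $\g$ by its value at the starting point, so $\g$ lies in a compact subset of $\ov M$. A causal curve inextendible in $M$ and imprisoned in a compact subset of the strongly causal $\ov M$ must extend to a causal curve of $\ov M$, i.e.\ acquire a past endpoint $r$; and $r\notin M$ (otherwise $\g$ would be past-extendible in $M$), so $r\in\partial M=\J$. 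To pin down $r$: by the displayed identity and $\mathrm{int}\,J^+=I^+$, the set $\partial J^+(\eta|_{(0,1)},M)$ is disjoint from $I^+(\eta|_{(0,1)},M)=I^+(p,\ov M)\cap M$ while lying in $J^+(p,\ov M)$, so $\partial J^+(\eta|_{(0,1)},M)\subset J^+(p,\ov M)\setminus I^+(p,\ov M)=\partial J^+(p,\ov M)$, which is closed; thus $r\in\ov{\partial J^+(\eta|_{(0,1)},M)}\cap\J\subset\partial J^+(p,\ov M)\cap\J$. Finally, every point of $\J_1$ later than $p$ lies in $I^+(p,\ov M)$, so $\J_1\cap\partial J^+(p,\ov M)=\{p\}$, and since $q$ is the earliest point of $\J_2$ in $J^+(p,\ov M)$, $\J_2\cap\partial J^+(p,\ov M)=\{q\}$; hence $r=p$ or $r=q$.

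The main obstacle I anticipate is precisely this comparison of $M$-causality with $\ov M$-causality across the singular conformal boundary $\J$: making the perturbation-off-$\J$ argument rigorous in the presence of the merely Lipschitz metric $\ov g$ near $\J$ and the conformal blow-up of $g$ there, and running the imprisonment argument for a curve known only to be inextendible in $M$. With the displayed identity and the existence of the past endpoint in hand, the remainder is bookkeeping with achronal boundaries together with the causality-theoretic input of the appendix.
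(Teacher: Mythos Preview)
Your argument is correct and follows the paper's overall strategy: locate $q\in\partial J^+(p,\ov M)\cap\J_2$, obtain an achronal null curve $\eta$ from $p$ to $q$ inside $\partial J^+(p,\ov M)$ with $\eta|_{(0,1)}\subset M$, and for (3) trap $\gamma$ in a compact subset of $\ov M$, extract a past endpoint on $\J$, and pin it to $p$ or $q$. The differences are in execution. For compactness in (3) you use a Cauchy time function; the paper instead uses the causal diamond $J^+(p,\ov M)\cap J^-(\gamma(t_0),\ov M)$. For identifying the endpoint you first prove the containment $\partial J^+(\hat\eta,M)\subset\partial J^+(p,\ov M)$ via your displayed identity $I^+(p,\ov M)\cap M=I^+(\hat\eta,M)$ and then read off $\partial J^+(p,\ov M)\cap\J_i=\{p\},\{q\}$; the paper instead runs a four-case analysis on whether the endpoint lies in $I^\pm(p,\ov M)$ or $I^\pm(q,\ov M)$ along $\J_1,\J_2$, using push-up in $\ov M$. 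Your route is a bit cleaner, and the displayed identity you isolate (together with the perturbation-off-$\J$ step you flag as the main obstacle) is exactly what the paper proves separately as Lemma~\ref{boundary lem}, using cylindrical neighborhoods adapted to each $\J_i$; so you have front-loaded a lemma the paper defers. Indeed, the paper's own case analysis in (3) tacitly leans on the same $M$-versus-$\ov M$ chronology comparison when it cites ``contradicts equation~\eqref{bd J equation}'' for a point only known to lie in $\partial J^+(\hat\eta,M)$.
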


\proof
We first prove (1). Note first that $J^+(p, \ov{M})$ cannot contain all of $\J_2$ (as otherwise a past inextendible portion of $\J_2$ would be imprisoned in a compact set {which contradicts strong causality \cite[Proposition 3.3]{LingCausal}). Together with $J^+(p, \ov{M})\cap \J_2 \neq \emptyset$ this implies that there exists a point $q \in  \pd J^+(p,\ov{M}) \cap \J_2 $. By Proposition \ref{bd J for GH prop} we have
\begin{equation}\label{bd J equation}
\pd J^+(p,\ov{M}) \,=\, J^+(p, \ov{M}) \setminus I^+(p, \ov{M})
\end{equation}
Therefore there is a causal curve $\eta\colon [0,1] \to \pd J^+(p,\ov{M}) $ with $\eta(0) = p$ and $\eta(1) = q$. Let $\hat{\eta} = \eta |_{(0,1)}$. Since $\J_1$ and $\J_2$ are disjoint, we can assume $\hat{\eta} \subset M$. 
We claim that $\hat{\eta}$ is achronal in $M$. Suppose not. Then there are two values $0 < t_1, t_2 < 1$ and a timelike curve from $\eta(t_1)$ to $\eta(t_2)$. Then the push-up property (Proposition \ref{Push-up Prop}) implies $\eta(t_2) \in I^+(p, \ov{M})$. This contradicts $\eta(t_2) \in \pd J^+(p, \ov{M})$ which proves the claim. Therefore $\hat{\eta}$ is an achronal inextendible null geodesic in $M$. Hence it's a null line. Completeness follows from Proposition \ref{complete prop}.

Now we prove (2). Recall that $\pd J^+ = J^+ \setminus I^+$. Therefore it suffices to show $J^+(p,\ov{M}) = J^+(\eta, \ov{M})$ and likewise with $I^+$. The equality for $J^+$ holds trivially since $\eta(0) = p$. To show $I^+(p, \ov{M}) = I^+(\eta, \ov{M})$, note that the inclusion $\subseteq$ follows trivially and the inclusion $\supseteq$ follows by the push-up property (Proposition \ref{Push-up Prop}).

Finally we prove (3).  Suppose $\g \colon (0,1) \to \pd J^+(\hat{\eta}, M)$ is past inextendible within $M$. By closedness of $J^+(p,\ov{M})$ and (2), 
$\gamma \subseteq J^+(p,\ov{M})$, so for $t\in (0,t_0)$, $\gamma(t) \subseteq J^+(p,\ov{M})\cap J^-(\gamma(t_0),\ov{M})$. Hence $\gamma $ is past imprisoned in a compact subset of $\ov{M}$, so $\gamma $ extends continuously to $p':=\gamma(0)\in \ov{M}$. By assumption, $\gamma $ is past inextendible in $M$, so $p'\not \in M$, i.e., $p'\in \J_1\cup \J_2$. Now we show that this implies $p' = p$ or $p' = q$. First suppose $p' \in \J_2$. If $p' \in I^+(q, \ov{M})$, then for sufficiently small $t$, we have $\gamma(t) \in I^+(q, \ov{M})$. Therefore there is a timelike curve from $p$ to $\gamma(t)$ by the push-up property. This contradicts equation (\ref{bd J equation}). Now suppose $p' \in I^-(q, \ov{M})$. Then for small $t$ we have $\gamma(t) \in I^-(q, \ov{M})$, but this gives a timelike curve from $p$ to $q$ -- another contradiction. Therefore $p' \in \J_2$ implies $p' = q$. Now suppose $p' \in \J_1$. If $p' \in I^+(p, \ov{M})$, then for small $t$ we can find a timelike curve from $p$ to $\gamma(t)$ -- a contradiction. If $p' \in I^-(p, \ov{M})$, then for small $t$ we can find a closed causal curve from $\gamma(t)$ to $p$ to $\gamma(t)$ which contradicts causality of $(\ov{M}, \ov{g})$.  Therefore $p' \in \J_1$ implies $p' = p$. \qed

\medskip
\medskip

\begin{thm}\label{thm:null hypersurface}
	Assume the hypotheses of Proposition \ref{null hypersurface prop}. If $(M,g)$ satisfies the null energy condition, then there exists a totally geodesic null hypersurface in $M$ containing $\eta|_{(0,1)}$. 
\end{thm}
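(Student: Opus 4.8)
The plan is to follow, in this higher-dimensional and only partially complete setting, the strategy behind Galloway's null splitting theorem (cf.\ the corresponding argument in \cite{GalGraf}), applied to the null line $\hat\eta:=\eta|_{(0,1)}$.

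First I would record that $\hat\eta$ is a \emph{complete} achronal null geodesic in $M$: that it is an inextendible achronal null line in $M$ is part of Proposition \ref{null hypersurface prop}(1); past completeness is Proposition \ref{complete prop}, since the past endpoint $p$ of $\hat\eta$ lies on $\J_1$; and future completeness follows from the time-dual of Proposition \ref{complete prop}, since the future endpoint $q$ of $\hat\eta$ lies on $\J_2$. Next, consider the achronal boundaries $N^{+}:=\partial J^{+}(\hat\eta,M)$ and $N^{-}:=\partial J^{-}(\hat\eta,M)$. Since $\hat\eta$ is inextendible in the strongly causal spacetime $M$ it is closed, and standard causal theory shows that near $\hat\eta$ each of $N^{+}$, $N^{-}$ is a $C^{0}$ hypersurface ruled by null geodesics, with no edge along $\hat\eta$. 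Achronality of $\hat\eta$ together with the push-up property (Proposition \ref{Push-up Prop}) gives $\hat\eta\subseteq N^{+}\cap N^{-}$ with $\hat\eta$ a null generator of each; hence $N^{+}$ and $N^{-}$ are tangent along $\hat\eta$, with common null normal $\eta'$. Finally, since any $x\in N^{+}$ satisfies $x\in J^{+}(\eta(t))\subseteq J^{+}(N^{-})$ for a suitable $t$, and symmetrically, $N^{+}$ lies to the future side of $N^{-}$ near $\hat\eta$.

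The heart of the argument is to show that, in the support sense of Galloway, $N^{+}$ and $N^{-}$ have null mean curvatures of complementary signs near $\hat\eta$ (for normals oriented along the generators). This is where the null energy condition enters. Along the complete generator $\hat\eta$, the Raychaudhuri equation for the generator congruence of $N^{+}$ gives $\theta'\le -\theta^{2}/(n-1)$; were $\theta$ negative somewhere along $\hat\eta$, it would decrease to $-\infty$ at a finite affine parameter, which by future completeness of $\hat\eta$ is actually attained, producing a focal point of $\partial J^{+}(\hat\eta)$ along $\hat\eta$ and hence forcing a point of $\hat\eta$ into $I^{+}(\hat\eta)$ -- contradicting achronality. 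Together with the smooth null support hypersurfaces furnished by light cones of points of $\hat\eta$ slightly to the past (resp.\ future) -- these are smooth near $\hat\eta$ because $\hat\eta$, being maximal, has no conjugate points and (being achronal) no two of its points are joined by a second null geodesic -- this upgrades to the support-sense bounds for $N^{+}$ and $N^{-}$ in a neighborhood of $\hat\eta$; the bound for $N^{-}$ uses past completeness of $\hat\eta$.

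With this in hand I would apply Galloway's maximum principle for $C^{0}$ null hypersurfaces at each point of $\hat\eta$: the tangency, the side condition, and the complementary support-sense mean-curvature inequalities all hold, so $N^{+}$ and $N^{-}$ coincide on a neighborhood of each point of $\hat\eta$, and this common hypersurface is a smooth null hypersurface with vanishing null mean curvature; feeding $\theta\equiv 0$ back into Raychaudhuri and using the null energy condition forces the shear to vanish as well, so the hypersurface is totally geodesic. The local pieces agree on overlaps, hence patch to a totally geodesic null hypersurface $H\subseteq M$ containing $\hat\eta$. I expect the main obstacle to be the support-sense estimate: the focal-point argument must be run along $\hat\eta$ itself, since Proposition \ref{complete prop} supplies completeness only for null geodesics issuing from $\J$ and not for the other generators of $N^{\pm}$, and the precise sign and side conventions in the maximum principle must be tracked with care. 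The low regularity of $\ov g$, by contrast, causes no difficulty here: $\hat\eta$, the hypersurfaces $N^{\pm}$, and the entire analysis take place in $M$, where $g$ is smooth.
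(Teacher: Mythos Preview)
Your overall strategy---apply Galloway's null splitting/maximum-principle argument to the complete null line $\hat\eta$---is correct and is exactly what the paper does, by citing Theorem~4.1 and Remark~4.2 of \cite{GalMax}. The gap is in how you verify the hypotheses.

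The maximum principle for $C^0$ null hypersurfaces requires the support mean-curvature inequalities not only along $\hat\eta$ but at \emph{every} point of $N^+$ and $N^-$ in a neighborhood of $\hat\eta$. Your proposed support hypersurfaces, the future light cones $\partial J^+(\hat\eta(s))$ of points on $\hat\eta$, touch $N^+$ only at points of $\hat\eta$ itself: for $x\in N^+\setminus\hat\eta$ one has $x\in J^+(\hat\eta(s'))$ only via a broken null geodesic (along $\hat\eta$ and then along a different generator), so by push-up $x\in I^+(\hat\eta(s))$ for every $s<s'$, and $x\notin\partial J^+(\hat\eta(s))$. Thus these cones cannot serve as support hypersurfaces at $x$. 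In Galloway's proof the support at such an $x$ comes instead from $\partial J^+(y)$ for $y$ far to the past along the generator of $N^+$ through $x$, and sending $y$ to past infinity is what drives the expansion to zero; this step needs past completeness of \emph{that} generator, not of $\hat\eta$.

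You explicitly flag this as the obstacle and assert that Proposition~\ref{complete prop} gives completeness only for $\hat\eta$. That is where you miss the point of part~(3) of Proposition~\ref{null hypersurface prop}: it says that \emph{every} past-inextendible generator of $\partial J^+(\hat\eta,M)$ has past endpoint $p$ or $q$ on $\J$, so Proposition~\ref{complete prop} applies to all of them, giving past completeness of all generators of $N^+$ (and, dually, future completeness of all generators of $N^-$). This is precisely the weakened hypothesis in Remark~4.2 of \cite{GalMax}, and it is the entire content of the paper's one-line proof. Once you invoke Proposition~\ref{null hypersurface prop}(3) the rest of your outline becomes unnecessary: you may simply cite \cite{GalMax}.
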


\proof
This is a consequence of Theorem 4.1 in \cite{GalMax}. Note that by Remark 4.2 in \cite{GalMax}, it suffices that the generators of $\pd J^+\big(\eta|_{(0,1)}, M\big)$ are past complete and the generators of $\pd J^-\big(\eta|_{(0,1)}, M\big)$ are future complete.
This follows from Proposition \ref{complete prop} and (3) in Proposition \ref{null hypersurface prop} (along with its time-dual statement).
\qed

\medskip
\medskip

\begin{lem}\label{boundary lem}
Under the hypotheses of Proposition \ref{null hypersurface prop}, let $\eta \colon [0,1] \to \ov{M}$ be the constructed null curve. Then 
\[
\pd J^+(\eta|_{(0,1)}, M) \,=\, \pd J^+(\eta, \ov{M}) \cap M.
\]
\end{lem}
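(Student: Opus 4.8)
The plan is to push everything to the globally hyperbolic extension $\ov M$, where $\pd J^+(\eta,\ov M)=\pd J^+(p,\ov M)=J^+(p,\ov M)\setminus I^+(p,\ov M)$ is already under control (the first equality is Proposition \ref{null hypersurface prop}(2), the second is \eqref{bd J equation}) and $J^+(p,\ov M)$ is closed. Write $\hat\eta:=\eta|_{(0,1)}$; by Proposition \ref{null hypersurface prop}(1) this is a null line contained in the open set $M=\ov M\setminus\J$, with $\eta(t)\to p$ as $t\to 0^+$, and from the proof of Proposition \ref{null hypersurface prop}(2) we also have $I^+(\eta,\ov M)=I^+(p,\ov M)$. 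Note that $\J$, being a disjoint union of smooth inextendible timelike curves, is a closed embedded submanifold of $\ov M$ of codimension at least two (so $\dim\ov M\ge 3$).

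The one substantive point is the following.
\begin{itemize}
\item[(A)] For $a,b\in M$ one has $a\ll_{\ov M}b$ if and only if $a\ll_M b$; equivalently $I^\pm(S,\ov M)\cap M=I^\pm(S,M)$ for every $S\subseteq M$.
\end{itemize}
Only the forward implication needs an argument: a future-directed timelike curve from $a$ to $b$ in $\ov M$ can be $C^1$-perturbed with endpoints fixed into a curve that is still timelike (timelikeness being a $C^1$-open condition) and transverse to $\J$; since $\dim\J=1$ and $\dim\ov M\ge 3$, transversality forces the perturbed curve to be disjoint from $\J$, hence contained in $M$. (Alternatively one appeals to the low-regularity causality theory of Appendix \ref{app:lip}.) This is the step I expect to demand the most care, precisely because $\ov g$ is only $C^{0,1}$; everything else is routine causal bookkeeping.

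Using (A) I would prove three identities in turn. (i) $I^+(\hat\eta,M)=I^+(p,\ov M)\cap M$: the inclusion $\subseteq$ follows from (A) applied to $\hat\eta$ together with $I^+(\hat\eta,\ov M)\subseteq I^+(\eta,\ov M)=I^+(p,\ov M)$; for $\supseteq$, if $x\in I^+(p,\ov M)\cap M$ then $p\in I^-(x,\ov M)$, an open set, so $\eta(t_1)\in I^-(x,\ov M)$ for some $t_1\in(0,1)$, and (A) upgrades $\eta(t_1)\ll_{\ov M}x$ to $\eta(t_1)\ll_M x$, so $x\in I^+(\hat\eta,M)$. (ii) $\overline{J^+(\hat\eta,M)}^{M}=J^+(p,\ov M)\cap M$: prefixing $\eta|_{[0,t]}$ to a causal curve in $M$ issuing from $\eta(t)$ shows $J^+(\hat\eta,M)\subseteq J^+(p,\ov M)\cap M$, and the right side is closed in $M$; conversely $J^+(p,\ov M)\subseteq\overline{I^+(p,\ov M)}^{\ov M}$, and intersecting with the open set $M$ (so convergent sequences eventually enter $M$) together with (i) gives $J^+(p,\ov M)\cap M\subseteq\overline{I^+(\hat\eta,M)}^{M}\subseteq\overline{J^+(\hat\eta,M)}^{M}$. (iii) $\mathrm{int}_M J^+(\hat\eta,M)=I^+(\hat\eta,M)$: $\supseteq$ is clear since $I^+(\hat\eta,M)$ is open and sits inside $J^+(\hat\eta,M)$; for $\subseteq$, a point $x$ in the interior admits $y$ with $\eta(t)\le_M y\ll_M x$ (take $y$ just to the past of $x$, still inside $J^+(\hat\eta,M)$), and push-up (Proposition \ref{Push-up Prop}) gives $\eta(t)\ll_M x$, i.e.\ $x\in I^+(\hat\eta,M)$.

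Finally, combining (ii) and (iii),
\[
\pd J^+(\hat\eta,M)=\overline{J^+(\hat\eta,M)}^{M}\setminus\mathrm{int}_M J^+(\hat\eta,M)=\big(J^+(p,\ov M)\cap M\big)\setminus I^+(\hat\eta,M),
\]
which by (i) equals $\big(J^+(p,\ov M)\setminus I^+(p,\ov M)\big)\cap M=\pd J^+(\eta,\ov M)\cap M$, as claimed.
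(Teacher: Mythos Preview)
Your proof is correct and follows the same overall architecture as the paper's: both reduce the claim to the two identities $I^+(\hat\eta,M)=I^+(\eta,\ov M)\cap M$ and $\overline{J^+(\hat\eta,M)}^{\,M}=J^+(\eta,\ov M)\cap M$, and obtain the second from the first via the same closure/density reasoning (your step (iii) just re-derives what the paper cites as Corollary~\ref{bd for Lipschitz cor}). The only substantive difference is in how the key avoidance fact (A) is established. The paper proves it constructively: it picks a cylindrical coordinate neighborhood in which $\J_i$ is the $x^0$-coordinate line and explicitly reroutes the timelike curve around $\J_i$ on each segment where it enters such a neighborhood. You instead appeal to transversality plus the dimension count $1+1<\dim\ov M$, which makes the role of $n\ge 2$ explicit and is conceptually cleaner. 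The paper's hands-on argument has the mild advantage of not requiring the given timelike curve to be smoothed before perturbing (useful since $\ov g$ is only $C^{0,1}$ and curves may a~priori be merely Lipschitz), but this is easily patched in your version by first approximating by a smooth timelike curve. Both arguments tacitly require $n\ge 2$.
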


\proof
Denote $\hat{\eta} = \eta|_{(0,1)}$. Recall that $\hat{\eta} \subset M$. Since $(M,g)$ is smooth, Corollary \ref{bd for Lipschitz cor} gives 
$
\pd J^+(\hat{\eta}, M) \,=\, \ov{J^+(\hat{\eta}, M)}^M \,\setminus\, I^+(\hat{\eta}, M).
$
By Proposition  \ref{bd J for GH prop} we have 
\begin{align*}
\pd J^+(\eta, \ov{M}) \cap M \,&=\, \big[J^+(\eta, \ov{M}) \setminus I^+(\eta, \ov{M}) \big] \cap M
\\
&=\, \big[J^+(\eta, \ov{M}) \cap M \big] \,\setminus\, \big[I^+(\eta, \ov{M}) \cap M \big].  
\end{align*}
Thus it suffices to show 
\begin{enumerate}
\item[(1)] $I^+(\hat{\eta}, M) \,=\, I^+(\eta, \ov{M}) \cap M$
\item[(2)] $\ov{J^+(\hat{\eta}, M)}^M \,=\, J^+(\eta, \ov{M}) \cap M $

\end{enumerate}

We first prove (1). The left inclusion $\subseteq$ is clear. To show $\supseteq$, first note the following: For any $x\in \J_i$ ($i\in\{1,2\}$ fixed) there exists a neighborhood $U$ of $x$ in $\ov{M}$ such that for any $y_1,y_2\in \J_i \cap U$ with $y_2 \in I^+(y_1, \ov{M})$, $y_1$ and $y_2$ have neighborhoods $V_1,V_2\subseteq U$ such that any $y'_2\in V_2$ can be reached from any $y'_1\in V_1$ by a future directed timelike curve $\gamma$ in $U$ such that $\gamma \cap \J_i=\{y'_1, y'_2\}\cap \J_i$ (possibly empty). (This can be seen, e.g., by choosing a cylindrical neighborhood as in \cite{CG} such that additionally $U\cap \J_l =\emptyset $ for $l\neq i$ and such that $\J_i$ is the $x^0$-coordinate line.) From this we see that any two points $p_1,p_2$ on $\J_i$ have neighborhoods $U_1,U_2$ such that any point in $U_1 \cap M$ can be connected to any point in $U_2 \cap M$ by a timelike curve lying entirely in $M$. 

 Continuing the proof of (1), take $z \in I^+(\eta, \ov{M}) \cap M$. Since $I^-(z, \ov{M})$ is open, there is a timelike curve $\g \colon [0,1] \to \ov{M}$ such that $\g(0) \in \hat{\eta}$ and $\g(1) = z$. If $\gamma$ is contained in $M$, then we're done. If $\gamma$ intersects just one $\J_i$, say $\J_1$, then let  $s_0:=\min \{s:\gamma(s)\in \J_1\}$ and $s_1:=\max \{s:\gamma(s)\in \J_1\}$. By the above paragraph,  for small enough $\epsilon>0$ the curve $\gamma|_{[s_0-\epsilon,s_1+\epsilon]}$ can be replaced by a timelike curve with the same endpoints contained entirely in $M$. Hence $z \in I^+(\hat{\eta}, \ov{M})$. If $\gamma$ intersects multiple $\J_i's$, then we can repeat the procedure above to get $z \in I^+(\hat{\eta}, \ov{M})$. This proves (1).
 
Now we prove (2). We have 
\[
\ov{I^+(\hat{\eta},M)}^M \,=\, \ov{I^+(\eta, \ov{M})\cap M } ^M \,=\, \ov{I^+(\eta, \ov{M})\cap M}^{\ov{M}}\cap M \,=\, \ov{I^+(\eta, \ov{M})}^{\ov{M}}\cap M .
\]
The first equality follows from (1). The second equality follows from basic point set topology. For the third equality, $\subseteq$ is trivial and $\supseteq$ follows because $M$ is dense in $\ov{M}$. Thus (2) follows from Corollary \ref{bd for Lipschitz cor} and the fact that $J^+(\eta, \ov{M})$ is closed since $(\ov{M}, \ov{g})$ is globally hyperbolic \cite[Proposition 3.5]{LingCausal}.
\qed

\medskip
\medskip

\begin{prop} \label{cor: bd J+ is bd J-} 
Under the hypotheses of Theorem \ref{thm:null hypersurface}, let $\eta \colon [0,1] \to \ov{M}$ be the constructed null curve from Proposition \ref{null hypersurface prop}. Let $p = \eta(0)$ and $q = \eta(1)$. Then 
\begin{align*}
\pd J^+(p, \ov{M}) \,=\, \pd J^+(\eta|_{(0,1)}, M) \cup \{p, q\} \,=\, \pd J^-(\eta|_{(0,1)}, M) \cup \{p, q\} \,=\, \pd J^-(q, \ov{M})
\end{align*}
\end{prop}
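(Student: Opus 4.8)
The statement is a chain of four equalities. The two "outer" equalities — $\pd J^+(p,\ov M)=\pd J^+(\eta|_{(0,1)},M)\cup\{p,q\}$ and its time-dual $\pd J^-(\eta|_{(0,1)},M)\cup\{p,q\}=\pd J^-(q,\ov M)$ — should follow by combining Proposition~\ref{null hypersurface prop}(2) with Lemma~\ref{boundary lem}, while the crucial middle equality $\pd J^+(\eta|_{(0,1)},M)\cup\{p,q\}=\pd J^-(\eta|_{(0,1)},M)\cup\{p,q\}$ is where the null energy condition and the totally geodesic null hypersurface from Theorem~\ref{thm:null hypersurface} enter. I would organize the proof around establishing first that $\pd J^+(\eta|_{(0,1)},M)=\pd J^-(\eta|_{(0,1)},M)$ as subsets of $M$, and then separately patching in the two boundary points $p,q$ using the $\ov M$-level statements.

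**Step 1 (the outer equalities).** By Proposition~\ref{null hypersurface prop}(2), $\pd J^+(p,\ov M)=\pd J^+(\eta,\ov M)$. Lemma~\ref{boundary lem} gives $\pd J^+(\eta|_{(0,1)},M)=\pd J^+(\eta,\ov M)\cap M$. So I need to check that $\pd J^+(\eta,\ov M)\setminus M=\{p,q\}$. Since $\eta(0)=p\in\J_1$ and $\eta(1)=q\in\J_2$, certainly $p,q\in\pd J^+(\eta,\ov M)$ (they lie in $J^+(\eta,\ov M)$ but not in $I^+(\eta,\ov M)$ — the latter because $p,q$ on $\pd J^+(p,\ov M)=J^+(p,\ov M)\setminus I^+(p,\ov M)$ by equation~\eqref{bd J equation}, using $I^+(\eta,\ov M)=I^+(p,\ov M)$ from part (2)). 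Conversely any point of $\pd J^+(\eta,\ov M)\cap\J$ must lie on $\J_1$ or $\J_2$; the argument in the proof of Proposition~\ref{null hypersurface prop}(3) (push-up plus causality of $\ov M$) shows such a point equals $p$ or $q$. Hence $\pd J^+(p,\ov M)=\pd J^+(\eta|_{(0,1)},M)\cup\{p,q\}$, and the time-dual computation gives $\pd J^-(q,\ov M)=\pd J^-(\eta|_{(0,1)},M)\cup\{p,q\}$.

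**Step 2 (the middle equality — the main obstacle).** This is the heart. By Theorem~\ref{thm:null hypersurface}, under the NEC there is a totally geodesic null hypersurface $N\subset M$ containing $\hat\eta:=\eta|_{(0,1)}$; I would show $N\cup\{p,q\}$ essentially \emph{is} the common boundary. The generators of $\pd J^+(\hat\eta,M)$ are past complete and the generators of $\pd J^-(\hat\eta,M)$ are future complete (Proposition~\ref{complete prop} and Proposition~\ref{null hypersurface prop}(3) with its dual), which is exactly what makes the maximum principle / Galloway splitting argument of \cite{GalMax} apply: the achronal boundary $\pd J^+(\hat\eta,M)$ near $\hat\eta$ coincides with a smooth totally geodesic null hypersurface, and by the "null splitting" rigidity this hypersurface is simultaneously a piece of $\pd J^-(\hat\eta,M)$ (a totally geodesic null hypersurface containing $\hat\eta$ is ruled by complete null generators through $\hat\eta$, and these are the generators of both achronal boundaries). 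More precisely: every generator of $\pd J^+(\hat\eta,M)$ meets $\hat\eta$, since by part (3) a past-inextendible generator has past endpoint $p$ or $q$ in $\ov M$, so within $M$ it runs back arbitrarily close to $p$ or $q$; combined with the structure of $\pd J^\pm(p,\ov M)$ this forces each generator to pass through $\hat\eta$. Running the argument with $+$ and $-$ interchanged, the generators through $\hat\eta$ of $\pd J^+(\hat\eta,M)$ and of $\pd J^-(\hat\eta,M)$ are the same complete null geodesics (they are the null generators of the totally geodesic hypersurface $N$), so $\pd J^+(\hat\eta,M)=N=\pd J^-(\hat\eta,M)$. I expect the delicate point to be verifying that \emph{no} generator of $\pd J^\pm(\hat\eta,M)$ escapes to "null infinity" without meeting $\hat\eta$ — this is precisely where completeness (Proposition~\ref{complete prop}) and the endpoint dichotomy of Proposition~\ref{null hypersurface prop}(3) must be used carefully, together with the fact that $\pd J^+(p,\ov M)=\pd J^+(\eta,\ov M)$ is a single achronal set whose only "boundary at infinity" within $\ov M$ is $\{p,q\}$.

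**Step 3 (assembly).** Adjoining $\{p,q\}$ to the equality $\pd J^+(\hat\eta,M)=\pd J^-(\hat\eta,M)$ from Step 2 gives the middle equality, and combining with the two equalities from Step 1 yields the full chain. A brief sanity check against $\text{AdS}_2\times S^{n-1}$: there $N$ is one of the standard totally geodesic null hypersurfaces extending from $\J_1$ to $\J_2$, and indeed $\pd J^+(p)=\pd J^-(q)$ is exactly that hypersurface together with its two endpoints on conformal infinity, so the statement is consistent with the model.
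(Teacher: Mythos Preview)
Your Step 1 is essentially the paper's argument; the paper dispatches the identification $\pd J^+(\eta,\ov M)\setminus M=\{p,q\}$ more briefly by noting that $\J_1,\J_2$ are timelike and $\pd J^+(\eta,\ov M)$ is achronal, so each $\J_i$ can meet it at most once.

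In Step 2 you correctly identify the \cite{GalMax} splitting result as the engine, but your passage from the local statement (``$\pd J^+(\hat\eta,M)$ near $\hat\eta$ coincides with a totally geodesic null hypersurface which is also a piece of $\pd J^-(\hat\eta,M)$'') to the global equality has a gap. The claim that ``each generator of $\pd J^+(\hat\eta,M)$ passes through $\hat\eta$'' is false: $\hat\eta$ is a single null geodesic, while $\pd J^+(\hat\eta,M)$ is an $n$-dimensional hypersurface ruled by many generators, only one of which is $\hat\eta$ itself. A generator with past endpoint $p$ in $\ov M$ is just some null geodesic issuing from $p$ along $\pd J^+(p,\ov M)$; there is no reason it should coincide with $\hat\eta$. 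What Remark~4.2 of \cite{GalMax} actually delivers is that the \emph{connected component} of $\pd J^+(\hat\eta,M)$ containing $\hat\eta$ agrees with the corresponding component of $\pd J^-(\hat\eta,M)$. The missing ingredient is therefore \emph{connectedness} of $\pd J^+(\hat\eta,M)$, and this is exactly what the paper supplies: by your Step~1, $\pd J^+(\hat\eta,M)=\pd J^+(p,\ov M)\setminus\{p,q\}$; now $\pd J^+(p,\ov M)$ is a connected $C^0$ hypersurface (every point on it is joined to $p$ by a causal curve in the boundary) of dimension $n\ge 2$, so deleting two points leaves it connected. With connectedness in hand, Remark~4.2 of \cite{GalMax} gives $\pd J^+(\hat\eta,M)=\pd J^-(\hat\eta,M)$ outright, and Step~3 goes through.
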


\proof
Denote $\hat{\eta} = \eta|_{(0,1)}$. First note that by achronality both $\J_1$ and $\J_2$ can intersect $\pd J^+(\eta,\ov{M})$ only once, hence $\pd J^+(\eta,\ov{M})=(\pd J^+(\eta,\ov{M})\cap M)\cup \{p,q\}$. So $\pd J^+(p,\ov{M})=\pd J^+(\hat{\eta},M)\cup \{p,q\}$ follows from Proposition \ref{null hypersurface prop} (2) and Lemma \ref{boundary lem}.

Secondly, $\pd J^+(\hat{\eta},M)$ has only one connected component: From the above we know that $\pd J^+(\hat{\eta},M)=\pd J^+(p,\ov{M})\setminus \{p,q\}.$ As the boundary of a future set $\pd J^+(p,\ov{M})$ is a $C^0$ hypersurface \cite[Corollary 4.8]{LingCausal} so $\pd J^+(p,\ov{M})\setminus \{p,q\}$ is connected because $\pd J^+(p,\ov{M})$ is (any point on $\pd J^+(p,\ov{M})\subset J^+(p,\ov{M})$ can be connected to $p$). Thus, Remark 4.2 in \cite{GalMax} gives $\pd J^+(\hat{\eta},M)=\pd J^-(\hat{\eta},M)$, establishing the remaining equalities (using time duality for the last).
\qed

\begin{thm} \label{thm: pseudo cauchy}\label{thm:topology} Under the assumptions of Theorem \ref{thm:null hypersurface} let $\eta $ be an achronal null curve as constructed in Proposition \ref{null hypersurface prop}. Then 
	\begin{enumerate}
		\item[\emph{(1)}] any inextendible causal curve in $\ov{M}$ must meet $\pd J^+(\eta(0),\ov{M})$ and any inextendible timelike curve in $\ov{M}$ intersects $\pd J^+(\eta(0),\ov{M})$ exactly once,
		\item[\emph{(2)}] $\partial J^+(p,\ov M)$ is homeomorphic to $S^{n}$,
		\item[\emph{(3)}] the Cauchy surfaces of $\ov{M}$ are homeomorphic to $S^n$ and 
		\item[\emph{(4)}] $\ov{M}$ is  homeomorphic to $\R \times S^{n}$.
	\end{enumerate}	
\end{thm}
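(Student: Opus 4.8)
\textbf{Proof proposal for Theorem \ref{thm: pseudo cauchy}.}

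The plan is to establish (1) first, then derive (2), (3), (4) from it by fairly standard global-hyperbolicity arguments. For (1): set $p = \eta(0)$ and write $\Sigma := \pd J^+(p,\ov M)$. By Proposition \ref{cor: bd J+ is bd J-} we have the key identity $\Sigma = \pd J^+(\eta|_{(0,1)},M)\cup\{p,q\} = \pd J^-(q,\ov M)$, so $\Sigma$ is simultaneously the future boundary of $p$ and the past boundary of $q$. Now let $\beta$ be an inextendible causal curve in $\ov M$. Since $(\ov M,\ov g)$ is globally hyperbolic, $\beta$ meets every Cauchy surface; but more directly, I would argue: if $\beta$ never meets $\Sigma = \pd J^+(p,\ov M)$, then $\beta$ lies entirely in $I^+(p,\ov M)$ or entirely in its complement $\ov M\setminus J^+(p,\ov M)$ (a future set and the complement of a closed future set are both open, and $\beta$ connected). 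In the first case, following $\beta$ to the past, achronality considerations together with $\Sigma = \pd J^-(q,\ov M)$ give a contradiction: being past-inextendible inside $I^+(p,\ov M)$, the past of $\beta$ is imprisoned in the compact set $J^+(p,\ov M)\cap J^-(\beta(t_0),\ov M)$, contradicting strong causality (cf. the argument in Proposition \ref{null hypersurface prop}(3)); symmetrically for the future when $\beta\subset \ov M\setminus J^+(p,\ov M)$, using the $\pd J^-(q,\ov M)$ description. Hence $\beta$ meets $\Sigma$. For the "exactly once" claim when $\beta$ is timelike: two intersection points would produce a timelike curve between two points of the achronal set $\Sigma$ (achronality of $\pd J^+(p,\ov M)$ is standard, or follows from $\pd J^+ = J^+\setminus I^+$), a contradiction. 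So $\Sigma$ is an \emph{acausal} (topological) hypersurface met by every inextendible causal curve — i.e. a Cauchy surface for $\ov M$.

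For (2): by \cite[Corollary 4.8]{LingCausal} (already invoked in the proof of Proposition \ref{cor: bd J+ is bd J-}), $\Sigma = \pd J^+(p,\ov M)$ is a closed $C^0$ hypersurface, and it is generated by null geodesics issuing from $p$, which by Proposition \ref{null hypersurface prop}(3) and its time-dual either run to $q$ or run off to be past/future inextendible in $M$ — but since $\Sigma$ is also $\pd J^-(q,\ov M)$, every generator in fact runs from $p$ to $q$. Thus $\Sigma$ is the "suspension"/double cone on the link of $p$: the set of first future null cut points reached along the null geodesics from $p$, compactified by the single point $q$. Concretely I would argue that $\Sigma\setminus\{q\}=\pd J^+(p,\ov M)\setminus\{q\}$ is, via the exponential map / null cone parametrization, homeomorphic to an open cone over $S^{n-1}$, hence to $\R^n$; equivalently $\Sigma\setminus\{p,q\}\cong \R\times S^{n-1}$ (this matches the foliation statement, and is essentially the content that $\pd J^+(\hat\eta,M)$ is a connected $C^0$ hypersurface carrying the transverse spheres). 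Adding back the two cone points $p,q$ one-point-compactifies the two ends of $\R\times S^{n-1}$, yielding $S^n$. The cleanest way to make this rigorous is: $\Sigma$ is a compact $C^0$ hypersurface (compact because it is squeezed between $p$ and $q$: $\Sigma\subseteq J^+(p,\ov M)\cap J^-(q,\ov M)$, which is compact by global hyperbolicity), and a compact manifold-without-boundary admitting a function with exactly two critical points (a "time"-type coordinate, e.g. the parameter along generators from $p$) is a homotopy sphere; in dimension $n\le \dots$ one concludes $S^n$. I expect (2) to be the main obstacle, precisely because making the "cone on the link" picture rigorous at the possibly-non-smooth vertices $p,q$ and at $\J$ requires care: one must check $\Sigma$ is genuinely a compact topological manifold near $p$ and $q$ — here Proposition \ref{complete prop} (null completeness) and the structure of $\pd J^+(p,\ov M)$ near the vertex do the work — and then invoke the topological characterization of spheres.

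For (3): let $S$ be any Cauchy surface of $\ov M$. By (1), $\Sigma$ is an acausal hypersurface met by every inextendible causal curve, hence itself a Cauchy surface; two Cauchy surfaces of a globally hyperbolic spacetime are homeomorphic (the flow of a complete timelike vector field gives a homeomorphism $S\to\Sigma$, a standard fact, e.g. via \cite{Wald} or the $C^0$ version in Appendix \ref{app:lip} / \cite{LingCausal}), so $S\cong\Sigma\cong S^n$ by (2). For (4): global hyperbolicity gives a homeomorphism $\ov M\cong \R\times S$ (again the standard splitting, valid for $C^{0,1}$ metrics by the low-regularity causality theory cited in the appendix), and combining with (3) yields $\ov M\cong \R\times S^n$. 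The only genuinely new input beyond cited black-box results is the shape of $\Sigma$ in (2); everything else is assembling Proposition \ref{cor: bd J+ is bd J-}, the imprisonment/strong-causality trick from Proposition \ref{null hypersurface prop}, and standard Cauchy-surface theory.
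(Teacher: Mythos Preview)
Your overall architecture is right, but there is a genuine gap in (1) and a related error that propagates to (3).

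\textbf{The acausal/achronal confusion.} You assert that $\Sigma=\pd J^+(p,\ov M)$ is \emph{acausal} and hence a Cauchy surface. This is false: $\Sigma$ contains its own null generators (indeed $\eta\subset\Sigma$), so it is only \emph{achronal}. In particular, an inextendible null generator of $\Sigma$ is a causal curve meeting $\Sigma$ in a whole arc, so $\Sigma$ is not a Cauchy surface in the sense used here (every inextendible causal curve meets it exactly once). This is why the theorem only claims ``exactly once'' for \emph{timelike} curves.

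\textbf{The missing case in (1).} Your dichotomy ``$\beta\subset I^+(p)$ or $\beta\subset \ov M\setminus J^+(p)$'' is fine, and the imprisonment argument in the first case is correct. But ``symmetrically'' does not finish the second case: using $\Sigma=\pd J^-(q)$ you get a further split into $\beta\subset I^-(q)$ (handled by future imprisonment) and $\beta\subset \ov M\setminus J^-(q)$. The sub-case $\beta\subset \ov M\setminus\big(J^+(p)\cup J^-(q)\big)$ is never addressed, and nothing you have written rules it out. (One can close this gap by arguing that the connected compact achronal $C^0$ hypersurface $\Sigma$ is two-sided and hence separates $\ov M$ into exactly two components, forcing $I^+(p)=\ov M\setminus J^-(q)$; but you do not make this argument.)

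\textbf{How the paper handles this.} The paper's device is to choose a smooth metric $\check g_\epsilon\prec \ov g$ with strictly narrower lightcones. Then $g$-achronality of $\Sigma$ becomes $\check g_\epsilon$-\emph{acausality}, and since $\Sigma$ is compact (it sits in the diamond $J^+(p)\cap J^-(q)$) and is a $C^0$ hypersurface, Theorem \ref{acausal thm} makes $\Sigma$ a $\check g_\epsilon$-Cauchy surface. This yields $\ov M=I^+_{\check g_\epsilon}(\Sigma)\cup\Sigma\cup I^-_{\check g_\epsilon}(\Sigma)\subset I^+(\Sigma)\cup\Sigma\cup I^-(\Sigma)$, after which your imprisonment argument goes through cleanly. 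The same trick gives (3) and (4): any $\ov g$-Cauchy surface is also $\check g_\epsilon$-Cauchy, so it is homeomorphic to $\Sigma$. You cannot get (3) from ``$\Sigma$ is a Cauchy surface for $\ov g$'', because it isn't one.

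\textbf{On (2).} Your suspension picture is the right intuition and matches the paper's argument that $\Sigma\setminus\{p,q\}\cong\R\times S^{n-1}$ via the null generators. However, your Reeb-theorem alternative is not available: $\Sigma$ is only a $C^0$ hypersurface, so there is no Morse function to speak of. The delicate point---which you flag but do not resolve---is the local structure of $\Sigma$ at the vertices $p$ and $q$, where $\ov g$ is merely Lipschitz. The paper handles this with the cylindrical neighborhoods of \cite{CG}, showing that $\pd J^+(p,\ov M)\cap U$ is a Lipschitz graph over an $n$-disc near $p$ (and similarly near $q$), which is exactly what is needed to glue the two $\R^n$-charts onto the cylinder $\R\times S^{n-1}$ and obtain $S^n$.
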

\proof
We start by proving (1): Let $p:=\eta(0)$ and and let $\check{g}_\epsilon $ be a smooth metric on $\ov{M}$ with narrower lightcones than $g$, i.e., $\check{g}_\epsilon\prec g$ (c.f., \cite{CG}). Then $(\ov{M},\check{g}_\epsilon)$ is globally hyperbolic and $S:=\pd J^+(p,\ov{M})$ is a $\check{g}_\epsilon$-acausal
compact (since it is contained in the compact diamond $J^+(p,\ov{M})\cap J^-(\eta(1),\ov{M})$) topological hypersurface, hence a $\check{g}_\epsilon$-Cauchy hypersurface (cf., Theorem \ref{acausal thm} in Appendix A). This implies that $\ov{M}=I^+_{\check{g}_\epsilon}(S)\cup S \cup I^-_{\check{g}_\epsilon}(S)$. 
Since $I^{\pm}_{\check{g}_\epsilon}(S)\subseteq I^{\pm}(S)$ this implies that $\ov{M}=I^+(S)\cup S \cup I^-(S)$. Hence any inextendible causal curve $\g :(0,1)\to \ov{M}$ not meeting $S$ must intersect at least one of $I^+(S)$ or $I^-(S)$. Assume that $\g$ meets $I^+(S)$ in $\g(s_0)$ but doesn't intersect $S=J^+(\eta(0))\setminus I^+(\eta(0))=J^+(S)\setminus I^+(S)=\pd J^+(S)$, then $\g|_{(0,s_0]}$ is imprisoned in the compact set $J^-(\g(s_0))\cap J^+(S)$, a contradiction. 

Next, we establish (2): Let $U=I\times V$ be a cylindrical neighborhood around $p\equiv (0,\bar{p})$ (as defined in \cite{CG}), then $\pd J^+(p,U)$ is a Lipschitz graph over $V$ (cf.~\cite[Prop.~1.10]{CG}). Together with achronality of $\pd J^+(p,\ov{M})$ this implies $\pd J^+(p,\ov{M})\cap U=\pd J^+(p,U)$ (since any curve $t\mapsto (t,x_0)\in I\times V$ meets $\pd J^+(p,\ov{M})$ at most once and $\pd J^+(p,U)$ exactly once), so 
$\pd J^+(p,\ov{M})\cap U$ is a Lipschitz graph over $V$, hence homeomorphic to $V$ which is diffeomorphic to 
$\R^{n}$. Similarly, there exists a neighborhood $U_q$ around $q:=\eta(1)$ such that $\pd J^-(p,\ov{M})\cap U_q=\pd J^-(q,\ov{M})\cap U_q$ is homeomorphic to $\R^{n}$.
Now, let $\mathcal{S}\subseteq V$ be a smooth sphere around $\bar{p}$, then $(I\times \mathcal{S})\cap \pd J^+(p,\ov{M})\subseteq M$ is a smooth $n-1$-dimensional submanifold homeomorphic to $S^{n-1}$ that is met exactly once by each null geodesic generator of $\pd J^+(\eta|_{(0,1)},M)$. As in the discussion in \cite{GalMax}, Remark~4.1, this implies that $\pd J^+(p,\ov{M})\setminus \{p,q\} =\pd J^+(\eta|_{(0,1)},M)$ is homeomorphic to $\R \times S^{n-1}$. Together with the description of $\pd J^+(p,\ov{M})$ near $p$ and $q$ this shows that $\pd J^+(p,\ov{M})$ is indeed homeomorphic to $S^{n}$.

Now (3) and (4) follow by noting that both $\pd J^+(\eta(0),\ov{M})$ and any Cauchy surface $\Sigma$ of $\ov{M}$ are Cauchy surfaces for  $\check{g}_\epsilon $, hence they are homeomorphic, so  $\Sigma\cong S^n$ and $\ov{M}\cong \R \times S^n$.
\qed

\subsection{Foliations}

In section \ref{sec:onenullhypersurface} we established the existence of a totally geodesic achronal  null hypersurface for spacetimes obeying the NEC with two asymptotically $\text{AdS}_2 \times S^{n-1}$ ends $\J_1$ and $\J_2$ if $J^+(\J_1)\cap \J_2 \neq \emptyset$, i.e., if the ends are able to communicate. In this section we will show the existence of two transversal foliations by totally geodesic achronal null hypersurfaces under the following stronger assumption on the causal relationship between both ends. 

\medskip

\begin{Def}\emph{
	Two asymptotically $\text{AdS}_2 \times S^{n-1}$ ends $\J_1$ and $\J_2$ are said to be \emph{communicating at all times} if $\J_1\subseteq J^\pm(\J_2)$ and $ \J_2 \subseteq J^\pm (\J_1)$.}
\end{Def}

\medskip

\begin{prop}\label{prop:nullfoliation}
	Suppose $(M,g)$ has two asymptotically $\emph{\text{AdS}}_2 \times S^{n-1}$ ends $\J_1$ and $\J_2$ that are communicating at all times. If $(M,g)$ satisfies the NEC, then it is continuously foliated by totally geodesic achronal null hypersurfaces $N_t=\partial J^+(\gamma(t),\ov M)\cap M$, where $\gamma :\R \to \J_1$ is a paramerization of $\J_1$.
\end{prop}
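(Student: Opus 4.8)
The plan is to apply the machinery of Section \ref{sec:onenullhypersurface} not to a single point but to every point of $\J_1$, and then check that the resulting family of null hypersurfaces fills up $M$ disjointly and continuously. First, fix a parametrization $\gamma:\R\to\J_1$. For each $t$, set $p_t=\gamma(t)$. Because the ends are communicating at all times, $\J_2\subseteq J^+(\J_1)$ gives $J^+(p_t,\ov M)\cap\J_2\neq\emptyset$ (after possibly using that $J^+$ of a point far enough back on $\J_1$ catches $\J_2$; more carefully, communicating at all times means every $p_t$ lies in $J^-(\J_2)$, so $J^+(p_t,\ov M)\cap\J_2\neq\emptyset$). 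Hence the hypotheses of Proposition \ref{null hypersurface prop} and Theorem \ref{thm:null hypersurface} are satisfied for each $p_t$, producing an achronal null curve $\eta_t$ from $p_t$ to a point $q_t\in\J_2$ and, via Proposition \ref{cor: bd J+ is bd J-}, the identification $N_t:=\pd J^+(p_t,\ov M)\cap M=\pd J^+(\eta_t|_{(0,1)},M)=\pd J^-(\eta_t|_{(0,1)},M)$, which by Theorem \ref{thm:null hypersurface} is a totally geodesic achronal null hypersurface in $M$, and by Theorem \ref{thm:topology}(2) the full set $\pd J^+(p_t,\ov M)$ is homeomorphic to $S^n$ with $N_t\cong\R\times S^{n-1}$.

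Next I would prove that the $N_t$ are pairwise disjoint and cover $M$. For disjointness: if $t<t'$ then $p_{t'}\in I^+(p_t,\ov M)$ (since $\gamma$ is a timelike curve and the interior of its causal future is chronological future), hence $\pd J^+(p_{t'},\ov M)\subseteq I^+(p_t,\ov M)$ — indeed every point of $J^+(p_{t'},\ov M)$ is in $I^+(p_t,\ov M)$ by push-up (Proposition \ref{Push-up Prop}) — so $\pd J^+(p_{t'},\ov M)\cap\pd J^+(p_t,\ov M)=\emptyset$, and in particular $N_t\cap N_{t'}=\emptyset$. For the covering: take any $x\in M$. I claim $x\in N_t$ for a unique $t$. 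Consider $T^+=\{t: x\in I^+(p_t,\ov M)\}$ and $T^-=\{t:x\notin J^+(p_t,\ov M)\}$; both are open (by openness of $I^+$ and of the complement of the closed set $J^+(p_t,\ov M)$, together with continuous dependence on $t$), $T^+$ is nonempty for $t\to-\infty$ and $T^-$ is nonempty for $t\to+\infty$ (using $x\notin I^+(p_t)$ once $p_t$ is to the future of $x$, which follows from $\J_1\subseteq J^+(\J_2)\subseteq\cdots$ — more simply, $x\in J^-(\J_1)$ by communicating-at-all-times applied the other way, giving $x\notin J^+(p_t)$ for $t$ large). Since $\R$ is connected, there is $t_0\notin T^+\cup T^-$, i.e. $x\in J^+(p_{t_0},\ov M)\setminus I^+(p_{t_0},\ov M)=\pd J^+(p_{t_0},\ov M)$ by \eqref{bd J equation}; since $x\in M$, $x\in N_{t_0}$. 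Uniqueness is disjointness.

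Finally I would address continuity of the foliation. The natural statement is that the map $\Phi:\R\times(\text{model leaf})\to M$, or more simply the map $M\to\R$ sending $x$ to the unique $t$ with $x\in N_t$, is continuous; equivalently, $t\mapsto N_t$ varies continuously in the sense that $\{(t,x):x\in N_t\}$ is closed in $\R\times M$ (a closed, hence continuous, partition into topological hypersurfaces of the connected manifold $M$). This follows from the fact that $J^+(p_t,\ov M)$ depends upper-semicontinuously on $t$ and $I^+(p_t,\ov M)$ lower-semicontinuously: if $t_k\to t$ and $x_k\in N_{t_k}$ with $x_k\to x\in M$, then $x\in\overline{J^+(\gamma,\ov M)}$-type limit arguments (using closedness of $J^+$ in the globally hyperbolic $(\ov M,\ov g)$ and that $x_k\notin I^+(p_{t_k},\ov M)$) give $x\in J^+(p_t,\ov M)\setminus I^+(p_t,\ov M)=\pd J^+(p_t,\ov M)$, so $x\in N_t$. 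Invariance of domain (or the $C^0$-hypersurface structure of each $\pd J^+(p_t,\ov M)$ from \cite[Corollary 4.8]{LingCausal}) then upgrades this to a genuine topological foliation; to get an explicit continuous global chart one can, as in the proof of Theorem \ref{thm:topology}(2), flow $N_0\cong\R\times S^{n-1}$ along a timelike vector field transverse to all the $N_t$ (the gradient of the Cauchy time function of $\check g_\epsilon$ works, since each $\pd J^+(p_t,\ov M)$ is a $\check g_\epsilon$-Cauchy surface).

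The main obstacle I anticipate is the continuity step: one must be careful that the leaves do not "jump" — a priori $\pd J^+(p_t,\ov M)$ could fail to depend continuously on $t$ if, say, $I^+(p_t,\ov M)$ had a limit strictly larger than $I^+(p_t,\ov M)$ — and the cleanest way around this is to work inside the auxiliary globally hyperbolic spacetime $(\ov M,\check g_\epsilon)$, where each $S_t:=\pd J^+(p_t,\ov M)$ is a Cauchy hypersurface, and invoke the standard fact that a monotone family of Cauchy surfaces (here monotone because $p_t$ moves to the future along $\J_1$, so $S_{t'}\subseteq I^+_{\check g_\epsilon}(S_t)$ for $t'>t$) fits together into a continuous foliation; disjointness and covering are then immediate from the Cauchy property and one transfers back to $g$ via $N_t=S_t\cap M$.
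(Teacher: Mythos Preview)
Your setup, the disjointness argument, and the overall strategy for continuity are all fine and match the paper. The real gap is in the covering step: you assert that $T^+=\{t:x\in I^+(p_t,\ov M)\}$ is nonempty ``for $t\to-\infty$'', but you give no justification for this, and the sentence ``$x\in J^-(\J_1)$ by communicating-at-all-times applied the other way'' does not establish it. Communicating at all times is a condition relating $\J_1$ and $\J_2$ to each other; it says nothing directly about an arbitrary $x\in M$. A priori there could be a point $x\in M$ with $x\notin J^+(p_t,\ov M)$ for every $t\in\R$, i.e.\ $T^-=\R$ and $T^+=\emptyset$, and then your connectedness argument yields nothing.

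This is exactly the case the paper has to work to exclude. The paper argues: if $x\notin J^+(N_t,\ov M)$ for all $t$, then by Theorem~\ref{thm: pseudo cauchy} one has $x\in I^-(q_t,\ov M)$ for all $t$, where $q_t=\J_2\cap\pd J^+(p_t,\ov M)$. The points $q_t$ (for $t\le 0$, say) then lie in the compact set $J^+(x,\ov M)\cap J^-(q_0,\ov M)\cap\J_2$, so along a sequence $t_n\to-\infty$ one has $q_{t_n}\to\ov q\in\J_2$, while the other endpoints $p_{t_n}=\gamma(t_n)$ leave every compact set. A limit curve argument then produces a past-inextendible achronal causal curve from $\ov q$ in $\ov M$, contradicting compactness of $\pd J^-(\ov q,\ov M)$---and this last step uses that $J^-(\ov q,\ov M)\cap\J_1\neq\emptyset$, i.e.\ the hypothesis $\J_2\subseteq J^+(\J_1)$. (The Remark after the proposition in the paper makes this dependence explicit.) Your proposed shortcut via $\check g_\epsilon$ at the end does not circumvent the issue either: knowing that each $S_t=\pd J^+(p_t,\ov M)$ is a $\check g_\epsilon$-Cauchy surface tells you every inextendible causal curve meets each $S_t$, but it does not tell you that $\bigcup_t S_t=\ov M$; that is precisely the covering statement you are trying to prove.

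A smaller point on continuity: your closed-graph argument is correct, but to conclude $t_{x_k}\to t_x$ from it you also need $\{t_{x_k}\}$ bounded. The paper handles this by sandwiching: pick $x^\pm$ with $x^-\le x_k,x\le x^+$ for all $k$, so $t_{x^-}\le t_{x_k}\le t_{x^+}$.
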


\begin{proof}
	Since $\J_1$ and $\J_2$ are communicating at all times, $J^+(p, \ov M)\cap \J_2 \neq \emptyset $ for all $p\in \J_1$. Thus we may apply Proposition \ref{null hypersurface prop} and Theorem \ref{thm:null hypersurface} to  each point on $\J_1$ to obtain a family $\{N_t\}_{t\in \R}$ of totally geodesic achronal null hypersurfaces satisfying $N_t=\partial J^+(\gamma(t),\ov M)\cap M$, where $\gamma :\R \to \J_1$ is a paramerization of $\J_1$. We show that this is a continuous foliation.
	
	Since $\gamma(t_1)\ll \gamma(t_2)$ if $t_1< t_2$ we have $N_{t_1}\cap N_{t_2}=\emptyset $ if $t_1\neq t_2$. Next we show that $\{N_t\}_{t\in \R}$  covers all of $M$. Let $x\in M$. Suppose $x\in J^+(N_t, \ov M)$ for some $t$.  Then $J^-(x, \ov M)\cap \J_1 \neq \emptyset $ and, since $J^-(x, \ov M)$ cannot contain all of $\J_1$ (as otherwise a future inextendible portion of $\J_1$ would be imprisoned in a compact set)	there exists $t_x$ such that $\gamma(t_x)\in \partial J^-(x, \ov M)\cap \J_1 \neq \emptyset $. This implies that $x\in N_{t_x}$. Suppose, on the other hand, $x\notin  J^+(N_t, \ov M)$ for any $t$.  Then, by Theorem \ref{thm: pseudo cauchy}, $x\in I^-(N_t, \ov M)=I^-(q_t, \ov M)$, where $q_t:=\J_2\cap N_t$, for all $t\in \R$. Let $\eta_t:[0,1]\to \ov M $ denote one of the achronal past directed null geodesic generators of $N_t$ starting at $q_t$ and ending at $\gamma(t)$. Then $\{\eta_t(0)\}_{t\leq 0}$ is contained in a compact subset of $\J_2$, hence there exists a sequence $t_n\to -\infty$ such that $\eta_{t_n}(0)$ converges to some $\ov q \in \J_2$. Further, since $t_n\to -\infty$, $\eta_{t_n}(1)$ leaves every compact subset of $\ov M$. Hence \cite[Theorem 3.1]{MinguzziLimitCurves}\footnote{While the cited result is for smooth metrics, the same remains true for merely continuous metrics. This essentially follows from applying the smooth result to metrics with wider lightcones and then using a separate argument to show that the obtained limit curve is causal, see the proof of \cite[Thm.~1.5]{Samann}.} implies that there exists an inextendible \emph{achronal} limit curve $\ov \eta $ starting at $\ov q$ leaving every compact subset of $\ov M$ which contradicts $\partial J^-(\ov q,\ov M) \cong S^{n-1}$ which follows from the time dual of Theorem \ref{thm:topology} (note that $J^-(\ov q,\ov M)\cap \J_1\neq \emptyset$ by assumption).
	
	So we have established that each $x \in M$ belongs to a unique totally geodesic null hypersurface $N_{t_x}$.	It remains to show that the map $x \mapsto t_x$ is continuous. Let $x_n \to x$ and let $\eta_n :[0,1]\to \ov M$ be the achronal null geodesic generator of $N_{t_{x_n}}$ from $\gamma(t_{x_n})$ to $x_n$. Let $x^+, x^-$ be points for which $x^-\leq x_n,x$ and $x_n,x \leq x^+$ for all $n$. Then $t_{x^-}\leq t_{x_n}\leq t_{x^+}$. Let $t_0$ be any accumulation point of the sequence $t_{x_n}$. By \cite[Theorem 3.1]{MinguzziLimitCurves} there exists a subsequence $\eta_{n_k}$ converging to an achronal null curve from $\gamma(t_0)=\lim \eta_{n_k}(0)$ to $x=\lim \eta_{n_k}(1)$. But this implies $x\in \partial J^+(\gamma(t_0), \ov M)\cap M$, hence $t_0=t_x$. So the sequence $t_{x_n}$ has $t_x$ as its only accumulation point, hence converges to $t_x$ and thus $x\mapsto t_x$ is continuous. 
\end{proof}

{\it Remark.} While we assumed that the ends are communicating at all times in Proposition \ref{prop:nullfoliation} this result actually only required that $\J_1 \subseteq J^-(\J_2)$ (to ensure that $J^+(p, \ov M)\cap \J_2 \neq \emptyset $ for all $p\in \J_1$) and $\J_2 \subseteq J^+(\J_1)$ (to ensure that $J^-(\ov q, \ov M)\cap \J_1 \neq \emptyset $ for all $\ov q\in \J_2$). However, in the next Theorem we need to use a second foliation by null hypersurfaces transverse to the first and to obtain this transverse foliation we will need that $\J_1 \subseteq J^+(\J_2)$ and $\J_2 \subseteq J^-(\J_1)$. Thus, Theorem \ref{thm:surfacefoliation} really needs the full definition of both ends communicating at all times. 

\medskip

\begin{thm}\label{thm:surfacefoliation}
	Suppose $(M,g)$ has two asymptotically $\emph{\text{AdS}}_2 \times S^{n-1}$ ends $\J_1$ and $\J_2$ that are communicating at all times. If $(M,g)$ satisfies the NEC, then it is continuously foliated by totally geodesic
	$(n-1)$-dimensional submanifolds homeomorphic to $S^{n-1}$. These submanifolds may be obtained as the intersections of two transverse foliations by totally geodesic achronal null hypersurfaces.
\end{thm}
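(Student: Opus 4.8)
The plan is to combine Proposition \ref{prop:nullfoliation} with its time-dual to produce two transverse continuous foliations by totally geodesic achronal null hypersurfaces, and then show that their intersections give the desired codimension-two foliation. First, applying Proposition \ref{prop:nullfoliation} directly, using $\J_1 \subseteq J^-(\J_2)$ and $\J_2 \subseteq J^+(\J_1)$, yields a continuous foliation $\{N_t\}_{t\in\R}$ with $N_t = \partial J^+(\gamma(t),\ov M)\cap M$, where $\gamma:\R\to\J_1$ parametrizes $\J_1$. Running the time-dual argument — which is legitimate precisely because the ends communicate at all times, so that $\J_1\subseteq J^+(\J_2)$ and $\J_2\subseteq J^-(\J_1)$ (this is the point of the Remark following Proposition \ref{prop:nullfoliation}) — yields a second continuous foliation $\{P_s\}_{s\in\R}$ with $P_s = \partial J^-(\gamma(s),\ov M)\cap M$ (or equally well indexed by a parametrization of $\J_2$). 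Each $N_t$ and each $P_s$ is a totally geodesic achronal null hypersurface by Theorem \ref{thm:null hypersurface}.

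Next I would identify the intersections. Fix $t$. The hypersurface $N_t = \partial J^+(\gamma(t),\ov M)\cap M$ has, by Proposition \ref{cor: bd J+ is bd J-}, the property that $\partial J^+(\gamma(t),\ov M) = \partial J^-(q_t,\ov M)$ where $q_t = \J_2\cap N_t$; and by Theorem \ref{thm:topology}(2) this common boundary is homeomorphic to $S^n$, with $N_t$ itself homeomorphic to $\R\times S^{n-1}$ (the $\R$-factor being the null generators from $\gamma(t)$ to $q_t$). The key structural fact is that a leaf $P_s$ of the transverse foliation meets $N_t$ in a single cross-section: since $N_t$ is ruled by achronal null generators, and $P_s$ is itself achronal, $P_s\cap N_t$ can contain at most one point of each generator of $N_t$; and since the generators of $N_t$ run from the end $\J_1$ to the end $\J_2$ while $P_s$ separates $\ov M$ (being a Cauchy surface for a metric $\check g_\epsilon$ with narrower cones, by the argument in the proof of Theorem \ref{thm:topology}), each such generator must cross $P_s$ exactly once. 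Hence $\Sigma_{t,s} := N_t\cap P_s$ meets every generator of $N_t$ exactly once, so by the cross-section argument of \cite{GalMax}, Remark 4.1 (already invoked in the proof of Theorem \ref{thm:topology}(2)), $\Sigma_{t,s}$ is a topological $(n-1)$-sphere. It is totally geodesic in $M$ because it is the transverse intersection of two totally geodesic null hypersurfaces: at each point of $\Sigma_{t,s}$, the second fundamental form of $\Sigma_{t,s}$ in $M$, evaluated on tangent vectors of $\Sigma_{t,s}$, must lie in the tangent space of $N_t$ and in that of $P_s$ — and since both $N_t$ and $P_s$ are totally geodesic, the respective projections of $\nabla_XY$ vanish, forcing $\nabla_XY$ tangent to $\Sigma_{t,s}$. (Transversality of $N_t$ and $P_s$ along $\Sigma_{t,s}$ follows because two distinct null hyperplanes in a Lorentzian vector space meeting in a spacelike codimension-two subspace are automatically transverse; one should check $\Sigma_{t,s}$ is spacelike, which follows from it being an acausal cross-section.)

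Finally I would assemble the foliation. The collection $\{\Sigma_{t,s}\}_{(t,s)\in\R^2}$ partitions $M$: every $x\in M$ lies on a unique leaf $N_{t(x)}$ of the first foliation and a unique leaf $P_{s(x)}$ of the second, by Proposition \ref{prop:nullfoliation} and its dual, and hence on the single submanifold $\Sigma_{t(x),s(x)}$. Continuity of $x\mapsto(t(x),s(x))$ is exactly the continuity of the two foliations already proved in Proposition \ref{prop:nullfoliation} and its dual, so the partition is a continuous foliation. I expect the main obstacle to be the transversality/cross-section step: one must argue carefully that each null generator of $N_t$ meets $P_s$ exactly once — existence of an intersection uses the Cauchy-surface property of $P_s$ (for the narrower-cone metric $\check g_\epsilon$) applied to an inextendible-in-$\ov M$ extension of the generator, which runs between the two ends, and uniqueness uses achronality of $P_s$ — and that the resulting cross-section is genuinely spacelike so that the two null hypersurfaces are transverse there; handling the behavior near the endpoints on $\J_1$ and $\J_2$ (where the conformal factor degenerates) requires the same care as in the proof of Theorem \ref{thm:topology}(2), working in cylindrical neighborhoods as in \cite{CG}.
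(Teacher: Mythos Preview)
Your proposal is correct and follows essentially the same approach as the paper: two transverse null foliations (the paper indexes the second one via $\partial J^+$ from $\J_2$ rather than your $\partial J^-$ from $\J_1$, but by Proposition \ref{cor: bd J+ is bd J-} these are the same family up to reparametrization), then the cross-section argument showing each generator of $N_t$ meets the transverse leaf exactly once, yielding a totally geodesic $S^{n-1}$. The paper's version of the ``exactly once'' step is phrased via the endpoints of the generators lying in $I^-(\hat N_s,\ov M)$ and $I^+(\hat N_s,\ov M)$ respectively, which is the same separation argument you give via the Cauchy-surface property from Theorem \ref{thm: pseudo cauchy}.
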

\begin{proof}
	Let $N_t:=\partial J^+(\gamma_1(t),\ov M)\cap M$, where $\gamma_1 :\R \to \J_1$ is a paramerization of $\J_1$ and $\hat{N}_s:=\partial J^+(\gamma_2(s),\ov M)\cap M$, where $\gamma_2 :\R \to \J_2$ is a paramerization of $\J_2$ and let $S_{t,s}:=N_t\cap \hat{N}_s$. If non-empty, this intersection is a totally geodesic
	$(n-1)$-dimensional submanifold of $M$ because $N_t$ and $\hat{N}_s$ always intersect transversally. Additionally, every null geodesic generator of $N_t$ must meet $S_{t,s}$ exactly once: It suffices to show that  every null geodesic generator of $N_t$ must meet $\hat{N}_s$ exactly once if $N_t\cap \hat{N}_s\neq \emptyset$. For this intersection to be nonempty we must have $\partial J^+(\gamma_2(s),\ov M) \cap \J_1 \gg \gamma_1(t)$ and $\partial J^+(\gamma_1(t),\ov M) \cap \J_2 \gg \gamma_2(s)$, hence every null geodesic generator of $N_t$ starts in $I^-(\hat{N}_s,\ov M)$ and ends in $I^+(\hat{N}_s,\ov M)$, hence must intersect $\partial J^+(\hat{N}_s,\ov M)\cap M=\hat{N}_s$.
	
	So we may use the flow along the null geodesic generators of $N_t$ to conclude that $S_{t,s}$ is homeomorphic to any other $(n-1)$-dimensional submanifold of $N_t$ that is met exactly once by every null geodesic generator. Thus, remembering that by the last paragraph in the proof of (2) in Theorem \ref{thm:topology} the null hypersurface $N_t$ always contains an $(n-1)$-dimensional submanifold homeomorphic to $S^{n-1}$, every $S_{t,s}$ is homeomorphic to $S^{n-1}$.
	
	That this is indeed a continuous foliation follows completely analogously to the last paragraph in the proof of Theorem 3.16 in \cite{GalGraf}.
\end{proof}

\section{Asymptotically $\text{AdS}_2{\bf \times S^{n-1}}$ spacetimes with more than two ends}\label{sec:more ends}

\begin{thm}\label{thm:only2}
	Suppose $(M,g)$ has $k$ or countably infinite asymptotically $\emph{\text{AdS}}_2 \times S^{n-1}$ ends $\{\J_l\}_{l\in I}$, $I=\{1,\dots,k\}$, or $I=\mathbb{N}$. Suppose further that there exist $i,j\in I$, $i\neq j$, such that $J^+(p,\ov{M}) \cap \J_j \neq \emptyset$ for some $p\in \J_i$.  If $(M,g)$ obeys the null energy condition then $I$ is finite and $k = 2$; i.e. $(M,g)$ can have at most two asymptotically $\emph{\text{AdS}}_2 \times S^{n-1}$ ends.
\end{thm}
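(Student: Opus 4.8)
The plan is to reduce the statement to the two–end analysis of Section \ref{sec:two ends} applied to the communicating pair; the real content is checking that that analysis is insensitive to the presence of extra ends. After relabelling, assume $\J_1 = \J_i$ and $\J_2 = \J_j$; fix $p \in \J_1$ with $J^+(p,\ov M) \cap \J_2 \neq \emptyset$; let $\eta \colon [0,1] \to \ov M$ be a causal curve in $S := \pd J^+(p,\ov M)$ from $p$ to a point $q \in \J_2 \cap S$ (both exist by the argument in the proof of Proposition \ref{null hypersurface prop}), and write $\hat\eta := \eta|_{(0,1)}$. The one point in Section \ref{sec:two ends} that is not automatic once other ends are around is the claim that $S$ meets $\J$ only at $p$ and $q$, and I would establish this first. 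The key observation is: if $\sigma$ is an achronal null $\ov g$-geodesic whose image lies in $S$, then every non-endpoint of $\sigma$ lies in $M$. Indeed, if some interior point $\sigma(t_0)$ lay in an end $\J_l$, then the cylindrical-neighborhood rerouting used in the proof of Lemma \ref{boundary lem} would join $\sigma(t_0-\delta)$ to $\sigma(t_0+\delta)$ by a timelike curve in $M$; since $\sigma(t_0-\delta) \in J^+(p,\ov M)$ this forces $\sigma(t_0+\delta) \in I^+(p,\ov M)$, contradicting $\sigma \subset S = J^+(p,\ov M) \setminus I^+(p,\ov M)$. Applying this to $\eta$ (which is achronal by the push-up argument) gives $\hat\eta \subset M$, so $\hat\eta$ is a complete null line in $M$ (Proposition \ref{complete prop}).

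Next, suppose $r \in S \cap \J_l$ for some $l \neq i,j$. Since $S = J^+(p,\ov M) \setminus I^+(p,\ov M)$, there is an achronal null $\ov g$-geodesic $\mu$ from $p$ to $r$ with image in $S$; by the observation $\hat\mu := \mu|_{(0,1)} \subset M$, hence $\hat\mu \subset S \cap M = \pd J^+(\hat\eta,M)$ (Lemma \ref{boundary lem} together with the push-up identity $\pd J^+(\eta,\ov M)=\pd J^+(p,\ov M)$). This last set is connected ($S \setminus \{p\}$ is a connected Lipschitz hypersurface and $\pd J^+(\hat\eta,M)$ is obtained from it by deleting the at most countable closed set $(S \cap \J) \setminus \{p\}$), and its null generators are complete $g$-geodesics (Proposition \ref{complete prop} and its time dual, using that a generator past-inextendible in $M$ is imprisoned in $J^+(p,\ov M)\cap J^-(\cdot,\ov M)$ and so has its past endpoint on $\J$), so \cite[Remark~4.2]{GalMax} gives $\pd J^+(\hat\eta,M) = \pd J^-(\hat\eta,M)$; combined with the time dual of Lemma \ref{boundary lem} this yields $\pd J^+(\hat\eta,M) = \pd J^-(q,\ov M) \cap M$. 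Therefore $r \in \overline{\pd J^-(q,\ov M)} = \pd J^-(q,\ov M)$, so there is an achronal null $\ov g$-geodesic $\nu$ from $r$ to $q$ with image in $S$, and the concatenation $\mu * \nu$ is an achronal null $\ov g$-geodesic from $p$ to $q$ with image in $S$ that passes through $r$ as an interior point. By the observation $r \in M$, contradicting $r \in \J_l$ and $\J_l \cap M = \emptyset$. Hence $S \cap \J = \{p,q\}$ and $S \setminus \{p,q\} = \pd J^+(\hat\eta,M) \subset M$; with this in hand the whole of Section \ref{sec:two ends} applies to the pair $(\J_1,\J_2)$ — in particular $S$ is a compact topological hypersurface met by every inextendible timelike curve of $\ov M$ (as in the proof of Theorem \ref{thm:topology}), and $\ov M$ is homeomorphic to $\R \times S^n$.

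The counting argument now closes the theorem. If $I$ contained an index $l \notin \{i,j\}$, the smooth inextendible timelike curve $\J_l$ would meet $S$ in some point $r$; since $\J_l$ is disjoint from $\J_i \ni p$ and $\J_j \ni q$ we get $r \neq p,q$, so $r \in S \setminus \{p,q\} \subset M$, contradicting $\J_l \cap M = \emptyset$. Hence $I \subseteq \{i,j\}$; in particular $I \neq \mathbb{N}$, so $I$ is finite with $k = 2$. The main obstacle, absorbing essentially all of the work, is precisely the first step: verifying that the structural results of Section \ref{sec:two ends} — proved there under the assumption of exactly two ends — survive for the communicating pair when further ends are present. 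The crux is the observation above about achronal null geodesics in $S$, together with the routine check that the connectedness and compactness inputs to those results are unaffected by the extra ends; once that is done the counting argument is immediate.
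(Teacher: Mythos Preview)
Your ``key observation'' is the crux of your argument, and it is not correct. The cylindrical-neighborhood rerouting in the proof of Lemma~\ref{boundary lem} replaces a segment of a \emph{timelike} curve that enters and exits $\J_l$ by another timelike curve avoiding $\J_l$; it does not manufacture a timelike curve between two points that are only null-related. If $\sigma$ is achronal and $\sigma(t_0)\in\J_l$ is an interior point, then $\sigma(t_0-\delta)$ and $\sigma(t_0+\delta)$ are by hypothesis \emph{not} timelike related, so no such rerouting can exist. Concretely, take $(\ov M,\ov g)$ to be the Einstein static universe $\bbR\times S^n$ and let $\J$ consist of three $t$-lines through the north pole $N$, the south pole, and an equatorial point $E$. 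The null geodesic from $(0,N)$ along the great circle through $E$ lies in $\pd J^+((0,N),\ov M)$, is achronal, and passes through $(\tfrac{\pi}{2},E)\in\J$ as an interior point. (That example violates the NEC, but your observation is stated and first applied---to put $\hat\eta$ into $M$---before any appeal to the NEC.) Since you invoke the observation both to place $\hat\eta$ in $M$ and, at the end, to force $r\in M$ from the concatenation $\mu*\nu$, the gap propagates through the whole argument.

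The paper handles the first issue not by proving that the original $\hat\eta$ lies in $M$, but by \emph{cutting} $\eta$ at the first (and last) parameter at which it meets any $\J_l$ and relabelling the ends, so that without loss of generality the working null line $\eta|_{(0,1)}$ lies in $M$. The final contradiction is also different from yours: given a third end $\J_3$ meeting $S$ at $x$, one runs the same machinery on an achronal null curve $\tilde\eta$ from $x$ to $q=\eta(1)$ (again cutting and relabelling if needed so that $\tilde\eta|_{(0,1)}\subset M$), obtaining $\tilde S:=\pd J^+(x,\ov M)=\pd J^-(q,\ov M)$. Since also $S=\pd J^-(q,\ov M)$, one has $\tilde S=S$; hence the unique point $p'$ where $\J_1$ meets $\tilde S$ equals $\eta(0)=p$, and $p\le x\le p'=p$ with $p\neq x$ violates causality. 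This double-construction avoids altogether the claim that $S\setminus\{p,q\}\subset M$.
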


\proof W.l.o.g.~$i=1$, $j=2$.   Assume $k:=|I|>2$ (we include the case $k=\infty$).
Proceeding as in Proposition \ref{null hypersurface prop}, the assumptions allow us to obtain the first part of Proposition \ref{null hypersurface prop}:
\begin{itemize}
	\item[\emph{(1)}] There is an achronal null curve $\eta \colon [0,1] \to \ov{M}$ such that $\eta(0) = p\in \J_1$, $\eta(1) \in \J_2$.
\end{itemize}
		
	We may w.l.o.g.~assume that $\eta $ doesn't intersect any other $\J_l$: If $k$ is finite, $\eta $ intersects $\bigcup_{l=1}^k \J_l$ in isolated points, so this can be achieved by cutting $\eta$ off at the earliest parameter at which it intersects another $\J_l$,  renumbering the $\J_l$'s and rescaling $\eta$. If $k=\infty$, i.e., $I=\mathbb{N}$, note first that $\eta \cap M \neq \emptyset$ since $\eta\cap \bigcup_{l\in \mathbb{N}} \J_l $ contains at most countably infinitely many points. Let $t_0\in (0,1)$ be such that $\eta(t_0)\in M$, and set $t_1:=\inf \{t\in [0,t_0]: \eta|_{(t,t_0]}\subset M \}$ and $t_2:=\sup \{t\in [t_0,1]: \eta|_{[t_0,t)}\subset M \}$. Then, by closedness of $\J :=\bigcup_{l\in I} \J_l $ (cf. the remark after Definition \ref{maindef}) we have $\eta (t_1),\eta(t_2)\in \J $ and the desired property can again be achieved by cutting off $\eta $ (this time at $t_1$ and $t_2$), rescaling and renumbering the $\J_l$'s.
	
	Thus, $\eta|_{(0,1)}\subseteq M$ and $\eta|_{(0,1)}$ is a complete null line in $M$. With this we obtain the following versions of the other two parts of Proposition \ref{null hypersurface prop}:
	\begin{itemize}
	\item[\emph{(2)}] $\pd J^+(p, \ov{M}) \,=\, \pd J^+(\eta, \ov{M})$ and
	
	\item[\emph{(3)}] if $\g \subset \pd J^+(\eta|_{(0,1)}, M)$ is past inextendible within $M$, then $\g$ has past endpoint on $\J_l$ for some $l\in I$. In particular, all generators of $\pd J^+\big(\eta|_{(0,1)}, M\big)$ are past complete (and by time duality the generators of $\pd J^-\big(\eta|_{(0,1)}, M\big)$ are future complete).
\end{itemize}
Analogous to Theorem \ref{thm:null hypersurface}, Lemma \ref{boundary lem}, Proposition \ref{cor: bd J+ is bd J-} and Theorem \ref{thm: pseudo cauchy} we now may use this to get that $\pd J^+\big(\eta|_{(0,1)}, M\big)$ is a totally geodesic null hypersurface in $M$ and that $S:=\pd J^+(\eta(0),\ov{M})=\pd J^-(\eta(1),\ov{M})$ is a compact achronal hypersurface in $\ov{M}$ that is met exactly once by every inextendible timelike curve. Let $x\in S$ be the point where $\J_3$ intersects $S$. Then $x\neq \eta(0)$ and $x\neq \eta(1)$, $\eta(0)<_{\ov{M}}x$ and $x$ is the past endpoint of an achronal null curve $\tilde{\eta}:[0,1]\to \ov{M}$ ending in $q:=\eta(1)$. Again, w.l.o.g.~$\tilde{\eta}|_{(0,1)}\subseteq M$ is an achronal null line in $M$ (else, replace $x$) and hence $\tilde{S}:= \pd J^+(\tilde{\eta}(0),\ov{M})=\pd J^-(\tilde{\eta}(1),\ov{M})$ is also a compact achronal hypersurface in $\ov{M}$ that is met exactly once by every inextendible timelike curve. Let $p'$ be the point where $\J_1$ intersects $\tilde{S}$; hence,  $p'\in J^+(x,\ov{M})\setminus\{x\}$.   Further, since $\eta(1)=q=\tilde{\eta}(1)$, it must be that $\tilde{S}=S$.  It follows that $p' = \eta(0)$ since $\J_1$ intersects $S$ in $\eta(0)$.  This is a contradiction because we obtain $\eta(0)<_{\ov{M}} x <_{\ov{M}}p'=\eta(0)$.
\qed

\medskip

\section{Topological Censorship}\label{topcen}

Let $(M^{n+1},g)$ be a spacetime with $k \ge 1$  asymptotically ${\rm AdS}_2 \times S^{n-1}$ ends, and let $\J$ be one such end, i..e, $\J\equiv \J_1$ is one of the timelike lines $\J_1,\J_2, \dots $.  The domain of outer communications with respect to $\J$ is defined as the set
\beq\label{eq:doc}
\calD = I^-(\J, \ol{M}) \cap I^+(\J, \ol{M}).
\eeq
Note that $\calD$ is globally hyperbolic. This follows since for $p,q \in \mathcal{D}$, the push-up property can be used to show $J^+(p,\calD) \cap J^-(q, \calD) = J^+(p,\ol{M}) \cap J^-(q, \ol{M})$.

  Further, $\calD \cap M$ has $k\geq k_{\calD \cap M}\geq 1$ asymptotically ${\rm AdS}_2 \times S^{n-1}$ ends, one of which is $\J$. We are primarily interested in the case that $\calD$ is a proper subset of $\ol{M}$, which then signifies the presence of a black hole and/or white hole region.   

Roughly speaking, topological censorship asserts, under appropriate energy and causality conditions, that the topology (at the fundamental group level) of the domain of outer communications (DOC) can be no more complicated than the topology of conformal infinity.  In particular if conformal infinity is simply connected, then so is the DOC; see \cite{ChruGal} for a recent review.   Since, in our setting, $\J$ is simply connected, one would expect $\calD$ to be  simply connected.  This is almost the case.

\begin{thm}\label{topcenthm}
	Let $(M^{n+1},g)$ be a spacetime with $k \ge 1$ asymptotically ${\rm AdS}_2 \times S^{n-1}$ ends.  Let $\J$ be one such end, and  let $\calD$ be the DOC associated to $\J$.  Assume that the NEC holds on $\calD$.  Then either
	\ben
	\item[(i)] $\calD$ is simply connected, or
	\item[(ii)] $M$ has exactly one asymptotically ${\rm AdS}_2 \times S^{n-1}$ end (namely $\J$), $\calD \setminus \J$ contains a totally geodesic null hypersurface whose null geodesic generators have past and future end points on $\J$, and the Cauchy surfaces of $\calD$ (which are also Cauchy surfaces for $\ol{M}$) are double covered by an $n$-sphere. In particular, $\pi_1(\calD) = \bbZ_2$.
	\een
\end{thm}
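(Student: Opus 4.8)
The plan is to run a standard topological censorship argument via the covering-space trick, and then analyze what happens in the ``bad'' case where the argument is obstructed. First I would recall the structure of the covering-space approach: suppose $\calD$ is not simply connected, pass to the universal cover $\wt{\calD}$ with covering map $\pi\colon\wt{\calD}\to\calD$, and lift the conformal boundary piece $\J$ to $\wt{\calD}$. Since $\J$ is a timelike curve (in particular simply connected) each connected component of $\pi^{-1}(\J)$ maps homeomorphically to $\J$; call one such lift $\wt{\J}$. Because $\calD$ is globally hyperbolic (as noted in the paragraph preceding the theorem) so is $\wt{\calD}$, and $\wt{\calD}$ together with $\wt{\J}$ again realizes an asymptotically ${\rm AdS}_2\times S^{n-1}$ end (the conformal data pulls back, the NEC pulls back since $\pi$ is a local isometry). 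If the deck group is nontrivial there is a second lift $\wt{\J}'=\sigma\cdot\wt{\J}$ for some nontrivial deck transformation $\sigma$, disjoint from $\wt{\J}$.

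Next I would invoke the causal-theoretic machinery of Section \ref{sec:two ends}/\ref{sec:more ends} on the cover. The key geometric input, exactly as in the proof of Theorem \ref{thm:only2}, is that once one produces an achronal null curve $\eta$ from one end to another, one gets that $S:=\pd J^+(\eta(0),\wt{\calD})=\pd J^-(\eta(1),\wt{\calD})$ is a compact achronal hypersurface met exactly once by every inextendible timelike curve in $\wt{\calD}$, homeomorphic to $S^n$, and hence a Cauchy surface. The standard topological censorship conclusion is that the number of ends of the cover is controlled by the number of ends downstairs; since Theorem \ref{thm:only2} forces the cover to have \emph{at most two} ${\rm AdS}_2\times S^{n-1}$ ends once two of them communicate, the deck group can contribute at most one extra lift of $\J$, so $\pi_1(\calD)$ is either trivial or $\bbZ_2$ (a $\bbZ_2$ subgroup is forced because a third distinct lift $\sigma^2\cdot\wt{\J}$ would be a third communicating end, contradicting Theorem \ref{thm:only2}, unless $\sigma^2=\mathrm{id}$). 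This is where alternative (i) versus (ii) bifurcates.

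The delicate part — and the main obstacle — is showing that in the non-simply-connected case one actually \emph{can} produce the needed communicating pair of ends on the cover and then transfer the conclusion back down: one must check that $\wt{\J}$ and $\wt{\J}'=\sigma\cdot\wt{\J}$ do communicate (i.e.\ $J^+(\wt p,\wt{\calD})\cap\wt{\J}'\neq\emptyset$ for some $\wt p\in\wt{\J}$), which does \emph{not} follow from $\calD$ being the DOC of $\J$ alone, since a priori the lifts might be causally unrelated. I would handle this by using that any loop in $\calD$ generating a deck transformation, being freely homotopic into a compact region and $\calD$ being the DOC, can be deformed to meet both $I^+(\J)$ and $I^-(\J)$, so its lift connects $\wt{\J}$ to a translate $\sigma\cdot\wt{\J}$ by a causal relation; more carefully, one picks a point of $\wt{\J}$ and a point of $\sigma\cdot\wt{\J}$ that project to causally related points of $\calD$ near $\J$ and uses the push-up/achronality apparatus of Proposition \ref{null hypersurface prop} on $\wt{\calD}$. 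Once communication is established, the machinery of Theorem \ref{thm:only2} and Theorem \ref{thm:topology} applied in $\wt{\calD}$ gives a Cauchy surface $\cong S^n$; since $\wt{\calD}\to\calD$ restricts to a covering of Cauchy surfaces, the Cauchy surface $\Sigma$ of $\calD$ is covered by $S^n$, and the totally geodesic null hypersurface constructed upstairs (from Theorem \ref{thm:null hypersurface}) descends — its generators, having endpoints on $\wt{\J}$ and $\sigma\cdot\wt{\J}=\sigma\cdot\wt{\J}$, project to a totally geodesic null hypersurface in $\calD\setminus\J$ whose generators have both endpoints on $\J$. Finally, the fact that $\Sigma$ is covered by $S^n$ with a $\bbZ_2$ deck action, together with the explicit $\bbZ_2$ constraint from Theorem \ref{thm:only2}, pins $\pi_1(\calD)=\bbZ_2$; and if instead no such communicating pair exists on the cover, the deck group is trivial and we are in case (i). The remaining bookkeeping — that a nontrivial deck group with a single extra lift forces $k=1$ for $M$ itself (any genuine second end $\J_2$ of $M$ inside $\calD$ would, together with $\J$, communicate and then combine with the deck-translate lift to exceed two ends on the cover) — is routine once the above is in place.
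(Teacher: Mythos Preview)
Your overall approach matches the paper's: pass to the universal cover, lift $\J$ (and any other $\J_i$'s meeting $\calD$) to a collection of ends, and invoke Theorem \ref{thm:only2} on the cover to bound the number of ends there by two. The descent of the totally geodesic null hypersurface, the Cauchy surface identification, and the $\bbZ_2$ bookkeeping are also essentially what the paper does.

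The genuine gap is exactly the step you flagged as ``delicate'': you need that if the covering is nontrivial then some pair of distinct lifts communicates. Your loop-deformation sketch does not establish this. A loop in $\calD$ already lies in $I^+(\J)\cap I^-(\J)$, and its lift to $\wt\calD$ is a path from $\wt q$ to $\sigma\cdot\wt q$ that is in general \emph{not} causal; likewise, two points on different lifts projecting to causally related points of $\J$ need not be causally related upstairs. The paper avoids this entirely by reversing the dichotomy: rather than assuming ``cover nontrivial'' and trying to produce communication, it assumes that \emph{no} pair $\J_\alpha\ne\J_\beta$ of lifts satisfies $I^-(\J_\alpha)\cap I^+(\J_\beta)\ne\emptyset$ and shows directly that the cover is one-sheeted. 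Set $U_\alpha = I^-(\J_\alpha)\cap I^+(\J_\alpha)$. The no-communication hypothesis makes the $U_\alpha$ pairwise disjoint; the DOC property (every $q\in\calD$ lies on a timelike curve from $\J$ to $\J$, which lifts through any preimage $\wt q$ to a timelike curve from some $\J_\alpha$ to some $\J_\beta$, forcing $\alpha=\beta$) makes them cover $\wt\calD$. Connectedness then gives a single $U_\alpha$, hence a single sheet. This short argument is the missing ingredient and replaces your loop sketch.

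One smaller point: in the communicating case the paper also distinguishes whether the two ends $\J_1',\J_2'$ of the cover project to the same end of $\calD$ or to distinct ends. If they project to distinct ends, the cover is again one-sheeted and you are back in case (i); only when both project to $\J$ do you land in case (ii). You gesture at this in your final paragraph, but it should be stated explicitly.
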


Case (ii) can occur, as can be seen by considering a  simple quotient of the Einstein static universe.  By identifying antipodal points on the $n$-sphere, we obtain $\ol{M} = \bbR \times RP^{n-1}$, with obvious product metric.  Removing the $t$-line through the ``north pole" of $RP^{n-1}$, we obtain, after a conformal change, a spacetime $(M,g)$ which (i) has one asymptotically 
${\rm AdS}_2 \times S^{n-1}$ end, (ii) obeys the NEC, and (iii) has conformal completion $(\ol{M}, \ol{g})$ with Cauchy surfaces diffeomorphic to $RP^{n}$.

\proof[Proof of Theorem \ref{topcenthm}]   
Let $(\calD',g')$ be the universal covering spacetime of $(\calD, g)$, with covering map $p: \calD' \to \calD$, and $g' =p^*g$. Since $\J$ is simply connected, $\J' = p^{-1}(\J)$ consists of a disjoint union of copies of $\J$, $\J' = \sqcup_{\a \in A}\J_{\a}$, where $|A| = $ the number of sheets of the covering.   

Suppose first that
\beq\label{eq:empty}
I^-(\J_{\a}) \cap I^+(\J_{\b}) = \emptyset \text{ for all  }\a \ne \b \,.
\eeq
Consider the  collecton of open sets in $\calD'$,
\beq
U_{\a} = I^-(\J_{\a}) \cap I^+(\J_{\a})  \,.
\eeq
Equation \eqref{eq:empty} implies that the $U_{\a}$'s are pairwise disjoint.  It also implies that the $U_{\a}$'s cover 
$\calD'$:  Let $q'$ be any point  in $\calD'$, and consider $q = p(q') \in \calD$.  Equation~\eqref{eq:doc} implies that there exists a future directed timelike curve $\g$ from $\J$ to $\J$ passing through $q$. Lift $\g$ to obtain a timelike curve $\g'$ in $\calD'$, passing through $q'$, from $\J_{\a}$ to $\J_{\b}$ , for some $\a, \b$.  Equation \eqref{eq:empty} implies that $\a =  \b$, and hence $q' \in U_{\a}$.  Thus, since $\calD'$ is connected, there can be only one $U_{\a}$, and hence $\calD'$ is a one-sheeted covering of $\calD$, i.e. $\calD$ is simply connected.

Now suppose,
\beq\label{eq:nonempty}
I^-(\J_{\a}) \cap I^+(\J_{\b}) \neq \emptyset \text{ for some  }\a \ne \b \,.
\eeq
Let $\calK = \cup_{i =1}(\J_i)$, where $\J = \J_1, \J_2, \cdots $ are the asymptotically ${\rm AdS}_2 \times S^{n-1}$ ends of $\calD_0 :=\calD \cap M$, and let $\calK' = p^{-1}(\calK)$.   Since the fundamental group of any manifold is countable, $\calK'$ has countably many (perhaps countably infinite) components. 
Consideration of the function 
$\Omega' = \Omega \circ p$, where $\Omega : \calD \to \bbR$ is as in Definition \ref{maindef}, shows that each component of $\calK'$ is an asymptotically ${\rm AdS}_2 \times S^{n-1}$ end of $\calD_0' = p^{-1}(\calD_0)$.

From \eqref{eq:nonempty} it follows that  $J^+(\J_{\a})$ meets $\J_{\b}$.  
We further know that $\calD'$ is globally hyperbolic and satisfies the NEC, as these properties of $\calD$ lift to the cover.  
We may then apply Theorem \ref{thm:only2} to conclude that $\calD_0'$ has at most two 
${\rm AdS}_2 \times S^{n-1}$ ends.  In fact, it follows from \eqref{eq:nonempty} that $\calD'_0$ has exactly two 
${\rm AdS}_2 \times S^{n-1}$ ends, say, $\J_1'$ and $\J_2'$. If $p(\J_1')\neq p(\J_2')$, then $\calD_0$ has two asymptotically $AdS_2\times S^{n-1}$ ends and $\calD'$ is a one-sheeted covering of $\calD$. So $\calD$ is simply connected. 
If $\J=p(\J_1')\neq p(\J_2')$, then $\calD$ has only one end and $\calD'$ is a double cover of $\calD$. Moreover, by Theorem \ref{thm:null hypersurface}, there exists a totally geodesic null hypersurface $H'$ in 
$\calD' \setminus \J'$,  that extends from $\J_1'$ and $\J_2'$.  Then $H  = p(H')$ is an (immersed) totally geodesic null  hypersurface in $\calD$ whose null geodesic generators have past and future end points on $\J$.

Let $\S$ be a Cauchy surface for $\calD$. Then $\S' = p^{-1}(\S)$ is a Cauchy surface for $\calD'$. Using $\calD' \approx \bbR \times \S'$ and $\calD \approx \bbR \times \S$, it follows that $p|_{\S'} : \S' \to \S$ is a double covering of 
$\S$. 
Moreover, it follows from Theorem  \ref{thm:topology} that $\S'$ is homeomorphic to $S^{n-1}$.  Hence, $\S$ is double covered by a manifold homeomorphic to  $S^{n-1}$, and in particular is compact.  It follows that $\S$ must also be a Cauchy surface for $\ol{M}$ (cf., Theorem \ref{acausal thm} in Appendix~A).\qed

\appendix
\appendixpage
\addappheadtotoc

\section{Lipschitz metrics}\label{app:lip}

Recall that a \emph{$C^m$ spacetime} $(M,g)$ is a smooth manifold $M$ equipped with a $C^m$ time-oriented Lorentzian metric $g$. If a smooth spacetime $(M,g)$ has $k$ asymptotically $\text{AdS}_2\times S^{n-1}$ ends (see Definition \ref{maindef}), then the unphysical spacetime $(\ov{M}, \ov{g})$ has a $C^{0,1}$ metric $\ov{g}$, i.e. the components $\ov{g}_{\mu\nu} = \ov{g}(\pd_\mu, \pd_\nu)$ in any coordinate system $x^\mu$ are locally Lipschitz functions. In this case we say $(\ov{M}, \ov{g})$ is a \emph{Lipschitz spacetime}.

Classical references on causal theory such as \cite{Wald} and \cite{HE} make use of normal neighborhoods which require a $C^2$ metric. Therefore classical causal theory only holds for $C^2$ spacetimes. Since we are working with Lipschitz spacetimes, we require results from causal theory when the metric is only $C^{0,1}$. Treatments of causal theory for metrics with regularity less than $C^2$ can be found in \cite{CG, MinguzziCones, LingCausal}. 

Our definitions of timelike and causal curves will follow the conventions in \cite{LingCausal}. 
From that paper we have the following results. 

\medskip
\medskip

\begin{thm}[\cite{LingCausal}]
Let $(M,g)$ be a $C^0$ spacetime, then $I^+(p)$ and $I^-(p)$ are open.
\end{thm}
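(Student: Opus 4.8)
The plan is to reduce the claim to the corresponding fact for smooth spacetimes — where $I^\pm(p)$ is open by the classical argument using geodesically convex (normal) neighborhoods — by approximating the $C^0$ metric $g$ \emph{from the inside} by smooth metrics. The main obstacle is precisely that the classical proof is unavailable for a merely continuous metric, since normal neighborhoods (i.e.\ the exponential map) need a $C^2$ metric. The remedy is the approximation scheme of Chru\'sciel--Grant \cite{CG}: for any $C^0$ spacetime metric $g$ on $M$ there is a family of smooth time-oriented Lorentzian metrics $\check{g}_\epsilon$ with strictly narrower lightcones, $\check{g}_\epsilon \prec g$, such that $\check{g}_\epsilon \to g$ locally uniformly as $\epsilon \to 0$. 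The elementary but crucial observation is that $\check{g}_\epsilon \prec g$ forces every future-directed $\check{g}_\epsilon$-timelike vector to be future-directed $g$-timelike; hence every $\check{g}_\epsilon$-timelike curve is a $g$-timelike curve, and therefore $I^+(p,\check{g}_\epsilon) \subseteq I^+(p,g)$ for every $p$.

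The argument then runs as follows. Fix $q \in I^+(p,g)$ and choose a future-directed $g$-timelike curve $\gamma\colon[0,1]\to M$ from $p$ to $q$, piecewise $C^1$ in the convention of \cite{LingCausal}. Its image $\gamma([0,1])$ is compact, so $\dot\gamma$ is future-directed $g$-timelike with \emph{uniform room to spare}; since being future-directed timelike is a jointly open condition on the pair (metric, vector), one can pick $\epsilon>0$ small enough that $\gamma$ remains future-directed timelike also for $\check{g} := \check{g}_\epsilon$. (This uses only local uniform convergence $\check{g}_\epsilon \to g$ on the compact set $\gamma([0,1])$, together with the finitely many $C^1$ pieces of $\gamma$; no chart or partition-of-unity gymnastics is needed because the $\check{g}_\epsilon$ are global.) Consequently $q \in I^+(p,\check{g})$.

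Finally, since $(M,\check{g})$ is a smooth spacetime, $I^+(p,\check{g})$ is open in the classical sense, so there is an open neighborhood $V$ of $q$ with $V \subseteq I^+(p,\check{g}) \subseteq I^+(p,g)$, the last inclusion by the first paragraph. Thus $q$ is an interior point of $I^+(p,g)$; as $q$ was arbitrary, $I^+(p,g)$ is open, and $I^-(p,g)$ is open by the time-reversed argument. The only step requiring genuine care is the middle one — arranging a single smooth $\check{g}$ that is simultaneously narrower-coned than $g$ \emph{and} close enough to $g$ to keep the chosen curve $\gamma$ timelike — and this is exactly what the approximation lemmas of \cite{CG} provide (the same device $\check{g}_\epsilon \prec g$ is already used in Section~\ref{sec:two ends}).
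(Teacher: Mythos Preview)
The paper does not actually give its own proof of this statement; it is recorded in Appendix~\ref{app:lip} purely as a citation to \cite{LingCausal}, with no argument supplied. So there is nothing in the paper itself to compare your proposal against.

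That said, your proof is correct and self-contained. The Chru\'sciel--Grant construction \cite{CG} does furnish smooth time-oriented metrics $\check{g}_\epsilon \prec g$ converging to $g$ locally uniformly, and your compactness step is sound: since $\gamma$ is piecewise $C^1$ on a compact interval, its tangent vectors (on each piece) range over a compact subset of $TM$ on which $g(\dot\gamma,\dot\gamma)$ is bounded away from zero, so local uniform convergence of $\check{g}_\epsilon$ to $g$ forces $\check{g}_\epsilon(\dot\gamma,\dot\gamma)<0$ for $\epsilon$ small. The inclusion $I^+(p,\check{g}_\epsilon)\subseteq I^+(p,g)$ coming from $\check{g}_\epsilon\prec g$ then gives the open neighborhood of $q$ you need. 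This approximation route is clean and, as you note, re-uses the same device $\check{g}_\epsilon\prec g$ that the paper itself invokes in the proof of Theorem~\ref{thm:topology}.
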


\medskip
\medskip

The following result is known as the \emph{push-up property} which is proved in \cite[Theorem 4.5]{LingCausal}. See also \cite[Lemma 1.15]{CG} 

\medskip
\medskip

\begin{prop}[\cite{CG, LingCausal}]\label{Push-up Prop}
Let $(M,g)$ be a Lipschitz spacetime. Then 
\[
I^+\big(J^+(p)\big) \,=\, I^+(p) \,.
\]
\end{prop}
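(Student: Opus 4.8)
The inclusion $\supseteq$ is immediate since $p\in J^+(p)$, so the whole content is $I^+(J^+(p))\subseteq I^+(p)$. The plan is as follows. Suppose $q\in I^+(J^+(p))$ and fix $r$ with $p\le r$ and $r\ll q$; let $\sigma$ be a future-directed $g$-causal curve from $p$ to $r$ and $\tau$ a future-directed $g$-timelike curve from $r$ to $q$. In a $C^2$ spacetime one would simply deform the broken curve ``$\sigma$ then $\tau$'' near $r$ into a genuine timelike curve using a normal neighborhood. The whole difficulty in the present $C^{0,1}$ setting is that normal neighborhoods are unavailable, so a softer, purely causal-theoretic argument is needed — and this is exactly why one wants the statement packaged as a lemma to be cited.

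The key intermediate fact I would isolate is the inclusion $J^+(p)\subseteq\overline{I^+(p)}$, valid for continuous metrics. Granting it, the proof concludes quickly: $r\in\overline{I^+(p)}$, while $I^-(q)$ is open (openness of chronological futures and pasts holds for $C^0$ metrics) and contains $r$; hence $I^-(q)\cap I^+(p)\neq\emptyset$, and any point $z$ in this intersection satisfies $p\ll z\ll q$. Concatenating the two timelike curves at $z$ and rounding the corner — a local construction, legitimate for a continuous metric since the light cones vary continuously — produces a timelike curve from $p$ to $q$, i.e.\ $q\in I^+(p)$. (Here one also uses transitivity of $\ll$, which for $C^0$ metrics is just this corner-rounding applied locally.)

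Thus everything reduces to $J^+(p)\subseteq\overline{I^+(p)}$ for Lipschitz, indeed merely continuous, metrics, and I expect this to be the real obstacle. The approach there is the one developed in \cite{CG}: work in a cylindrical chart around $p$ in which $g$ is squeezed between two constant-coefficient flat metrics with slightly narrower and slightly wider cones, handle the elementary flat case, and then propagate the conclusion along $\sigma$ by an open–closed connectedness argument; the genuinely delicate point is the case where $\sigma$ has $g$-null pieces, which cannot be removed by any narrowing of the cones and must instead be treated by a direct local perturbation showing such endpoints still lie in $\overline{I^+(p)}$. An alternative route is to approximate $g$ by smooth metrics $\check g\prec g\prec\hat g$: then $\sigma$ is $\hat g$-timelike so $r\in I^+_{\hat g}(p)$, while $I^+_{\check g}(p)\subseteq I^+_g(p)$, and letting the approximating cones shrink to those of $g$ one recovers $r\in\overline{I^+_g(p)}$ via the limit curve theorem for continuous metrics — but checking that these comparison inclusions interact correctly with passing to closures, and that the limit curves obtained are genuinely $g$-causal, is again where the work lies. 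In the body of the paper this is simply quoted from \cite{LingCausal} (see also \cite{CG}).
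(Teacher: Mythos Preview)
The paper does not give its own proof of this proposition; it is stated with a citation to \cite{CG, LingCausal} (specifically \cite[Theorem 4.5]{LingCausal} and \cite[Lemma 1.15]{CG}), so there is no in-paper argument to compare against.

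Your reduction is logically valid, but note a structural point: in the paper's own ordering, the inclusion $J^+(p)\subseteq\overline{I^+(p)}$ is \emph{deduced from} push-up (it is part (2) of the very next proposition, whose proof invokes Proposition~\ref{Push-up Prop}). So within this paper your key intermediate fact is downstream of the statement you are proving, and you would indeed need to establish it independently, as you say. That is fine --- the cylindrical-chart argument from \cite{CG} you sketch does exactly this --- but it means your route and the direct route (prove push-up locally in a cylindrical chart, then derive $J^+\subseteq\overline{I^+}$ as a corollary) involve essentially the same hard step; you have not genuinely reduced the difficulty, only repackaged where it sits. In the cited references the push-up is proved directly via the local chart analysis rather than by first passing through $J^+\subseteq\overline{I^+}$, so your organization is mildly nonstandard but not incorrect.

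One small caution on the alternative approximation route you mention: the inclusion $r\in I^+_{\hat g}(p)$ for $\hat g\succ g$ together with $I^+_{\check g}(p)\subseteq I^+_g(p)$ for $\check g\prec g$ does not by itself give $r\in\overline{I^+_g(p)}$ without an additional argument linking the families as the cones close down; the limit-curve step you allude to is genuinely needed there, and making it precise is again comparable in effort to the direct cylindrical-chart proof.
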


\medskip
\medskip

The push-up property implies:

\medskip
\medskip

\begin{prop}
Let $(M,g)$ be a Lipschitz spacetime. Then 
\begin{itemize}
\item [$\emph{(1)}$] $\emph{\text{int}}\big[J^+(p)\big] \,=\, I^+(p)\,.$
\item [$\emph{(2)}$] $J^+(p) \,\subset\, \ov{I^+(p)} \,.$
\end{itemize}
\end{prop}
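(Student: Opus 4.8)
The plan is to derive both statements directly from the push-up property (Proposition~\ref{Push-up Prop}), $I^+(J^+(p)) = I^+(p)$, together with openness of the chronological future (the cited theorem of \cite{LingCausal}) and the elementary fact that through every point of a $C^0$ (in particular $C^{0,1}$) spacetime there pass both future- and past-directed timelike curves.

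For part (1), I would prove the two inclusions separately. The inclusion $I^+(p) \subseteq \emph{\text{int}}[J^+(p)]$ is immediate: $I^+(p) \subseteq J^+(p)$ and $I^+(p)$ is open, so $I^+(p)$ is contained in the interior of $J^+(p)$. For the reverse inclusion, take $q \in \emph{\text{int}}[J^+(p)]$ and choose a neighborhood $U$ of $q$ with $U \subseteq J^+(p)$. Running a past-directed timelike curve from $q$, it enters $U$, so there is a point $q' \in U$ with $q' \in I^-(q)$; since $q' \in U \subseteq J^+(p)$ we get $q \in I^+(q') \subseteq I^+\big(J^+(p)\big) = I^+(p)$ by push-up. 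Hence $\emph{\text{int}}[J^+(p)] \subseteq I^+(p)$.

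For part (2), let $q \in J^+(p)$ and let $U$ be an arbitrary neighborhood of $q$; it suffices to produce a point of $I^+(p)$ in $U$. A future-directed timelike curve issuing from $q$ enters $U$, yielding $q'' \in U$ with $q'' \in I^+(q)$. Since $q \in J^+(p)$ we have $q'' \in I^+\big(J^+(p)\big) = I^+(p)$, so $q'' \in U \cap I^+(p)$. As $U$ was arbitrary, $q \in \ov{I^+(p)}$, which gives $J^+(p) \subseteq \ov{I^+(p)}$.

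I do not expect any real obstacle here; the argument is a routine bookkeeping exercise once push-up is in hand. The only point requiring (standard) care in the low-regularity setting is the existence of short timelike curves through an arbitrary point in both time directions, which holds for $C^0$ metrics and is part of the causal-theoretic framework of \cite{LingCausal, CG} being used; everything else reduces to the identity $I^+(J^+(p)) = I^+(p)$ and openness of $I^\pm$.
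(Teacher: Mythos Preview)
Your proof is correct and follows essentially the same approach as the paper's own proof: both inclusions in (1) and the closure argument in (2) are obtained exactly as you do, by combining openness of $I^+(p)$, the existence of nearby timelike past/future points, and the push-up identity $I^+(J^+(p)) = I^+(p)$.
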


\proof
We first prove (1). Since $I^+(p) \subset J^+(p)$ and $I^+(p)$ is open, we have $I^+(p) \subset \text{int}\big[J^+(p)\big]$. Conversely, fix $q \in \text{int}\big[J^+(p)\big]$ and let $U \subset \text{int}\big[J^+(p)\big]$ be an open neighborhood of $q$. Let $q' \in I^-(q,U)$. Then there is a causal curve from $p$ to $q'$ and a timelike curve from $q'$ to $q$. Therefore $q \in I^+(p)$ by the push-up property. 

Now we prove (2). Fix $q \in J^+(p)$. Let $U$ be a neighborhood of $q$. Consider a point $q' \in I^+(q,U)$. Then $q' \in I^+(p)$ by the push-up property. 
\qed

\medskip
\medskip

\begin{cor}\label{bd for Lipschitz cor}
Let $(M,g)$ be a Lipschitz spacetime. Then 
\begin{itemize}
\item [$\emph{(1)}$] $\pd J^+(p) \,=\, \ov{J^+(p)} \,\setminus\, I^+(p)\,.$
\item [$\emph{(2)}$] $\ov{J^+(p)} \,=\, \ov{I^+(p)} \,.$
\end{itemize}
\end{cor}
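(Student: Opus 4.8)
The plan is to deduce Corollary \ref{bd for Lipschitz cor} directly from the preceding Proposition, whose two parts establish $\text{int}[J^+(p)] = I^+(p)$ and $J^+(p) \subseteq \ov{I^+(p)}$. For part (2), I would first note that $I^+(p) \subseteq J^+(p)$ gives $\ov{I^+(p)} \subseteq \ov{J^+(p)}$ trivially, and the reverse inclusion $\ov{J^+(p)} \subseteq \ov{I^+(p)}$ follows by taking closures in part (2) of the Proposition, $J^+(p) \subseteq \ov{I^+(p)}$, since $\ov{I^+(p)}$ is already closed. That settles (2).

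For part (1), I would use the general topological identity $\pd A = \ov{A} \setminus \text{int}(A)$ applied to $A = J^+(p)$, giving $\pd J^+(p) = \ov{J^+(p)} \setminus \text{int}[J^+(p)]$. Then substitute $\text{int}[J^+(p)] = I^+(p)$ from part (1) of the Proposition to obtain $\pd J^+(p) = \ov{J^+(p)} \setminus I^+(p)$, which is exactly the claimed formula. One small point to check is that removing $\text{int}(A)$ versus removing the possibly-larger-looking $I^+(p)$ agrees — but here they are literally equal by the Proposition, so no subtlety arises; alternatively one could observe $I^+(p)$ is open and contained in $J^+(p)$, hence in $\text{int}[J^+(p)]$, while the reverse is part (1) of the Proposition.

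Honestly, there is no real obstacle here: this is a routine corollary packaging the two parts of the immediately preceding Proposition together with the elementary set-theoretic fact $\pd A = \ov A \setminus \mathrm{int}\, A$. The only thing to be careful about is bookkeeping — making sure (2) is proved first (or at least independently) since it is not needed for (1), and making sure the interior/closure identity is invoked in the correct ambient space (here everything is taking place in the fixed spacetime $M$, so there is no ambiguity about which topology is meant). I would write the proof in two or three lines.

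\proof
We first prove (2). Since $I^+(p) \subseteq J^+(p)$, taking closures gives $\ov{I^+(p)} \subseteq \ov{J^+(p)}$. For the reverse inclusion, part (2) of the preceding Proposition gives $J^+(p) \subseteq \ov{I^+(p)}$; since $\ov{I^+(p)}$ is closed, taking closures yields $\ov{J^+(p)} \subseteq \ov{I^+(p)}$. This proves (2).

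Now we prove (1). For any subset $A$ of a topological space, $\pd A = \ov{A} \setminus \text{int}(A)$. Applying this with $A = J^+(p)$ and using part (1) of the preceding Proposition, $\text{int}\big[J^+(p)\big] = I^+(p)$, we obtain $\pd J^+(p) = \ov{J^+(p)} \setminus I^+(p)$.
\qed
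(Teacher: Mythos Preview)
Your proof is correct and is exactly the routine deduction the paper intends; the paper states this as an immediate corollary of the preceding Proposition without giving a separate proof, and your two-line argument using $\partial A = \ov{A} \setminus \mathrm{int}\,A$ together with $\mathrm{int}[J^+(p)] = I^+(p)$ and $J^+(p) \subseteq \ov{I^+(p)}$ is precisely what is implicit there.
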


\medskip
\medskip

Following \cite{LingCausal}, a $C^0$ spacetime $(M,g)$ is \emph{globally hyperbolic} provided it is strongly causal and $J^+(p) \cap J^-(q)$ is compact for all $p$ and $q$. 

\medskip
\medskip

\begin{prop}\label{bd J for GH prop}
Let $(M,g)$ be a globally hyperbolic Lipschitz spacetime. Then
\[
\pd J^+(p) \,=\,  J^+(p) \,\setminus\, I^+(p)\,.
\]
\end{prop}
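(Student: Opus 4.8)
The plan is to reduce the claim to the closedness of $J^+(p)$. By Corollary \ref{bd for Lipschitz cor}(1) we already know that $\pd J^+(p) = \ov{J^+(p)} \setminus I^+(p)$, so it suffices to show $\ov{J^+(p)} = J^+(p)$; substituting this into the previous identity gives $\pd J^+(p) = J^+(p) \setminus I^+(p)$ at once.

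To prove $J^+(p)$ is closed I would take a sequence $q_k \in J^+(p)$ with $q_k \to q$ and produce a causal curve from $p$ to $q$. First choose a point $q^+ \in I^+(q)$ (which exists because the spacetime is time-oriented), so that $I^-(q^+)$ is an open neighborhood of $q$ and hence contains $q_k$ for all large $k$. For such $k$, if $\gamma_k$ is a causal curve from $p$ to $q_k$, then every point of $\gamma_k$ lies in $J^+(p)$ and in $J^-(q_k) \subseteq J^-(q^+)$, so $\gamma_k$ is contained in the set $J^+(p) \cap J^-(q^+)$, which is compact by global hyperbolicity.

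Then I would invoke a limit curve theorem valid for merely continuous metrics (as in \cite{MinguzziLimitCurves}, with the $C^0$ adaptation spelled out in \cite{Samann}): after reparametrizing the $\gamma_k$ by $h$-arclength for an auxiliary complete Riemannian metric $h$ and using that they lie in a fixed compact set with endpoints converging to $p$ and $q$, a subsequence converges uniformly on compact sets to a causal limit curve $\gamma$ from $p$ to $q$. Hence $q \in J^+(p)$, so $J^+(p)$ is closed, and we are done. Alternatively, one may simply cite \cite[Proposition 3.5]{LingCausal}, where precisely this closedness statement for globally hyperbolic $C^0$ spacetimes is established.

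The only genuinely delicate point is this last step: limit curve arguments require care when the metric is merely $C^0$, since normal neighborhoods are unavailable and the limit of causal curves need not be causal a priori. This is exactly the issue handled by passing to metrics with slightly wider light cones and then verifying causality of the limit separately; since we work in the Lipschitz (indeed only $C^0$ is needed) category, this machinery applies, and everything else is routine point-set topology.
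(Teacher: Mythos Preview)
Your proposal is correct and follows essentially the same approach as the paper: reduce to closedness of $J^+(p)$ via Corollary \ref{bd for Lipschitz cor}(1), then invoke \cite[Proposition 3.5]{LingCausal}. The paper's proof is just the two-line version of what you wrote; your additional sketch of the limit-curve argument for closedness is a faithful unpacking of that cited result.
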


\proof
By Corollary \ref{bd for Lipschitz cor}, it suffices to show $J^+(p)$ is closed for globally hyperbolic spacetimes. This follows from Proposition 3.5 in \cite{LingCausal}.
\qed

\medskip
\medskip

A set $S \subset M$ is a \emph{Cauchy surface} for a $C^0$ spacetime $(M,g)$ provided every inextendible causal curve intersects $S$ exactly once. From {\cite[Section 5]{Samann} we know that a $C^0$ spacetime $(M,g)$ is globally hyperbolic if and only if it has a Cauchy surface. The following result will be used in this paper.

\medskip
\medskip

\begin{thm}\label{acausal thm}
Let $(M,g)$ be a globally hyperbolic and Lipschitz spacetime. If $S \subset M$ is an acausal, compact, and $C^0$ hypersurface. Then $S$ is a Cauchy surface.
\end{thm}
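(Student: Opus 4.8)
\textbf{Proof plan for Theorem \ref{acausal thm}.}

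The plan is to show that every inextendible causal curve in $\ov{M}$ (I write $\ov{M}$ for the ambient spacetime $M$ in the theorem statement) meets the hypersurface $S$ exactly once, splitting this into the ``at most once'' and ``at least once'' assertions. The ``at most once'' part is immediate from acausality of $S$: if an inextendible causal curve met $S$ twice, the segment between the two intersection points would be a causal curve joining two points of $S$, contradicting acausality (one must rule out the degenerate case of the curve being constant between those parameters, but since $S$ is a hypersurface and the curve is nonconstant causal, this is handled by reparametrizing away constant segments). So the heart of the matter is existence: every inextendible causal curve in $\ov{M}$ must hit $S$.

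For existence, I would first argue that $S$ separates $\ov{M}$ into the disjoint union $I^+(S) \sqcup S \sqcup I^-(S)$. Since $(\ov{M},\ov{g})$ is globally hyperbolic and Lipschitz, it carries a smooth globally hyperbolic metric $\check g \prec \ov{g}$ with strictly narrower lightcones (cf.\ \cite{CG}); $S$ remains $\check g$-acausal, compact, and a $C^0$ hypersurface, so by the smooth (or $C^0$) Cauchy-surface characterization $S$ is a $\check g$-Cauchy surface, whence $\ov{M} = I^+_{\check g}(S) \cup S \cup I^-_{\check g}(S)$. Because $I^\pm_{\check g}(S) \subseteq I^\pm(S)$ and the sets $I^+(S)$, $S$, $I^-(S)$ are pairwise disjoint (the disjointness of $I^+(S)$ and $I^-(S)$ from $S$ uses acausality of $S$; their mutual disjointness uses causality of $\ov{M}$), the three $\ov{g}$-sets also partition $\ov{M}$. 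This is exactly the argument already carried out in the proof of Theorem \ref{thm: pseudo cauchy}(1), so I would simply invoke that reasoning.

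Now let $\g:(a,b)\to\ov{M}$ be inextendible and causal, and suppose $\g$ misses $S$. Then $\g$ lies entirely in the open set $I^+(S)\cup I^-(S)$; by connectedness of the interval and the fact that a causal curve cannot pass from $I^-(S)$ to $I^+(S)$ without crossing $S = \pd J^+(S) = \pd J^-(S)$ (here one uses $\pd J^+(S) = J^+(S)\setminus I^+(S)$ from global hyperbolicity, Proposition \ref{bd J for GH prop}, together with acausality which gives $J^\pm(S)\cap S = S$), $\g$ stays in one of them, say $\g \subset I^+(S)$. Pick any parameter $s_0$; then $\g|_{(a,s_0]}$ is a past-inextendible causal curve contained in $J^+(S)\cap J^-(\g(s_0))$, which is compact by global hyperbolicity (it equals $\bigcup_{x\in S} J^+(x)\cap J^-(\g(s_0))$, compactness following from compactness of $S$ and of the causal diamonds, or directly since $J^+(S)$ is closed and the diamond is compact). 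Thus $\g|_{(a,s_0]}$ is past-imprisoned in a compact set, contradicting strong causality of $(\ov{M},\ov{g})$ (cf.\ \cite[Proposition 3.3]{LingCausal}). Hence $\g$ meets $S$, completing the proof.

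The main obstacle is the low regularity: one must be careful that the separation property $\ov{M}=I^+(S)\sqcup S\sqcup I^-(S)$ and the identity $S = \pd J^\pm(S)$ genuinely hold for a merely $C^{0,1}$ globally hyperbolic metric. This is precisely why the detour through a smooth cone-narrowing metric $\check g$ and the $C^0$-Cauchy-surface theory of \cite{Samann} is needed, rather than a direct classical argument; all the pieces, however, are already assembled in the excerpt (Proposition \ref{bd J for GH prop}, the proof technique of Theorem \ref{thm: pseudo cauchy}, and the imprisonment/strong-causality result of \cite{LingCausal}).
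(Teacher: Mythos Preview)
Your argument is correct and follows the same overall architecture as the paper's sketch: establish the separation $\ov{M}=I^+(S)\sqcup S\sqcup I^-(S)$, then derive a contradiction from imprisonment in the compact set $J^+(S)\cap J^-(\g(s_0))$ via strong causality. The one genuine difference is in how you obtain the separation: the paper argues directly that $I^+(S)\cup S\cup I^-(S)$ is open and closed (using that $S$ is an acausal $C^0$ hypersurface for openness and $J^+(S)=S\sqcup I^+(S)$ closed for closedness), whereas you pass to a smooth narrower-cone metric $\check g\prec\ov g$ and invoke the classical smooth result to get that $S$ is $\check g$-Cauchy. Your route is slicker but less self-contained; the paper's route avoids any external input beyond push-up and closedness of $J^+$.

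One presentational point to fix: you write ``this is exactly the argument already carried out in the proof of Theorem \ref{thm: pseudo cauchy}(1), so I would simply invoke that reasoning,'' but the proof of Theorem \ref{thm: pseudo cauchy}(1) itself cites Theorem \ref{acausal thm} (applied to the smooth metric $\check g_\epsilon$) to conclude $S$ is a $\check g_\epsilon$-Cauchy surface. So as written your reference is circular. The logic is fine once you observe that what is actually needed is only the \emph{classical smooth} version of the statement (acausal compact topological hypersurface in a smooth globally hyperbolic spacetime is Cauchy), which is standard; just say that explicitly rather than pointing back to Theorem \ref{thm: pseudo cauchy}.
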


\noindent \emph{Sketch of proof.}
First one shows that $J^+(S) = S \sqcup I^+(S)$. This follows because $S$ is acausal and a $C^0$ hypersurface, and the proof uses the push-up property. Next one shows that $M = I^+(S) \sqcup S \sqcup I^-(S)$. This follows by showing the right hand side is both open and closed (and thus equals $M$ since $M$ is connected). Open follows by considering a small coordinate neighborhood around a point on $S$ and using the fact that $S$ is an acausal $C^0$ hypersurface. Closed follows because $J^+(S) = S \sqcup I^+(S)$ and the fact that $J^+(S)$ is closed, which follows because $(M,g)$ is globally hyperbolic  and $S$ is compact.

Now we show $S$ is a Cauchy surface. Let $\g \colon \R \to M$ be an inextendible causal curve. Let $p = \g(0)$. By above, either $p$ lies in $I^+(S)$ or $S$ or $I^-(S)$. If $p \in S$, then we're done. Suppose $p \in I^+(S)$. Claim: there exists a $t_0 < 0$ such that $\g(t_0) \notin J^+(S)$. Suppose not. Then $\g|_{(-\infty, 0)}$ is a past inextendible causal curve contained in the compact set $J^-(p) \cap J^+(S)$ which contradicts strong causality \cite[Prop. 3.3]{LingCausal}. This proves the claim. Thus $\g(t_0) \in I^-(S)$. Since $S$ separates $M$, there is a $t_1 \in (t_0, 0)$ such that $\g(t_1) \in S$. Hence $\g$ intersects $S$. If $p \in I^-(S)$, then one applies the time-dual of the above proof. 
\qed

\medskip
\medskip

\noindent\emph{Remark.} Although it's not necessary for our paper, Theorem \ref{acausal thm} holds even when $S$ is only locally acausal.

\section{Asymptotics - a class of examples}\label{app:examples}

In this section we consider the class of examples \eqref{rcoords}
from  section \ref{Def sec}, and obtain conditions under which these examples 
 are asymptotically $\text{AdS}^2\times S^{n-1}$ in the sense of Definition \ref{maindef}.} The metric for these examples may be written as,
\begin{align*}
g \,&=\, -f(r)dt^2 + \frac{1}{f(r)}dr^2 + d\omega^2_{n-1}
\\
&=\,
f(r)\left[-dt^2 + \frac{1}{f^2(r)}dr^2 + \frac{1}{f(r)}d\omega^2_{n-1} \right]
\end{align*}
where $f(r) > 0$ is a smooth positive function.
We define a new coordinate $x$ via $r(x) = - \tan(x - \pi/2)$. The domain of $x$ is $0 < x < x_0$ where $0 < x_0 < \pi/2$ is a constant. Note that $r(x)$ is a decreasing function of $x$ and $r = \infty$ corresponds to $x = 0$.  We have $dr = -\sec^2(x-\pi/2)dx = -\csc^2(x)dx$. Letting $F(x) = f\circ r(x)$, we have
\[
g \,=\, \frac{1}{\Omega^2(x)} \underbrace{\big(-dt^2 +G^2(x) \, h\big)}_{\bar g} 
\] 
where
\begin{itemize}

\item[-] $\Omega(x) \,=\, 1/\sqrt{F(x)}$
\item[-] $G(x) \,=\, \csc^2(x)/F(x)$
\item[-] $h \,=\, dx^2 + a^2(x) d\Omega^2_{n-1}$
\item[-] $a(x) \,=\,  \sqrt{F(x)}\sin^2(x)$\,.

\end{itemize}

\medskip
\medskip

\noindent{\bf Example.} Let $f(r) = 1 + r^2$. Then $F(x) = 1 + \tan^2(x - \pi/2) = \csc^2(x)$. Therefore $\Omega(x) = \sin(x)$ and $\ov{g} = -dt^2 + dx^2 + \sin^2(x)d\omega^2_{n-1}$. Hence $\ov{g}$ is the metric for the Einstein static universe. In this case $x  = 0$ (i.e. $r = \infty$) corresponds to the north pole of $S^n$ within the Einstein static universe. From the example in the beginning of section \ref{Def sec}, we see that $g$ is the metric for $\text{AdS}^2 \times S^{n-1}$. In this example, $x = 0$ is a coordinate singularity which represents the north pole of $S^n$. In the spacetime $\ov{M}$, $x = 0$ represents the timelike line $\mathcal{J}$.

\medskip
\medskip

We want to find sufficient conditions on $f(r)$ such that  $x = 0$ (i.e. $r = \infty$) represents a coordinate singularity as in the example above. Sufficient conditions are given in Theorem \ref{app C thm}. Afterwards we show how to apply the theorem to Schwarzschild-$\text{AdS}_2\times S^{n-1}$.

\medskip

\begin{thm}\label{app C thm}
Suppose $a(x)$ is smooth and satisfies
\[
a(x) \,=\, x + O(x^{2}) \:\:\:\: \text{ and } \:\:\:\: a'(x) \,=\, 1 + O(x)
\]
Then $(M,g)$ satisfies conditions (ii), (iii), and (iv) in Definition \ref{maindef}, with $\J$ given by $x = 0$ in $\ov{M}$.
\end{thm}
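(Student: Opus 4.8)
The goal is to verify conditions (ii), (iii), (iv) of Definition \ref{maindef} for the spacetime $(M,g)$ with $M = \R^2\times S^{n-1}$ (or the range of $x$ in $(0,x_0)$) once we adjoin the locus $x=0$. The natural strategy is to work in the conformal frame already set up before the theorem: $g = \Omega^{-2}\bar g$ with $\bar g = -dt^2 + G^2(x)\,h$, $h = dx^2 + a^2(x)\,d\omega^2_{n-1}$, $\Omega = 1/\sqrt{F(x)}$, and $a(x) = \sqrt{F(x)}\sin^2 x$. The first thing I would do is rewrite everything intrinsically in terms of $a$. From $a(x) = \sqrt{F}\sin^2 x$ we get $\sqrt{F(x)} = a(x)/\sin^2 x$, hence $G(x) = \csc^2(x)/F(x) = (\sin^2 x)/(F(x)\sin^4 x)\cdot \sin^2 x = \ldots$ — more cleanly, $G(x) = \csc^2 x / F(x)$ and $F(x) = a^2/\sin^4 x$, so $G(x) = \csc^2 x \cdot \sin^4 x / a^2 = \sin^2 x / a(x)^2$. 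Also $\Omega(x) = \sqrt{F}^{-1} = \sin^2 x / a(x)$. So near $x=0$, using $a(x) = x + O(x^2)$, we have $\Omega(x) = \sin^2 x/a(x) \sim x^2/x = x \to 0$, and $G(x) = \sin^2 x/a(x)^2 \sim x^2/x^2 \to 1$. This is the key reduction: the hypotheses on $a$ are exactly what makes $\bar g = -dt^2 + G^2 h$ extend smoothly across $x=0$ to a metric that looks like $-dt^2 + (\text{round-at-origin polar metric})$ on $\R^n$.

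The main work, and the step I expect to be the genuine obstacle, is \textbf{showing $\bar g$ extends to a smooth Lorentzian metric on a manifold $\bar M$ in which the $x=0$ locus becomes a single timelike line} — i.e. that $h$ extends smoothly across $x=0$ to the standard smooth metric on (a neighborhood of the origin in) $\R^n$, collapsing the $S^{n-1}$ at $x=0$ to a point. For this I would introduce Cartesian-type coordinates $y = x\,\theta$, $\theta\in S^{n-1}\subset\R^n$, and check that $h = dx^2 + a(x)^2 d\omega^2_{n-1}$ has smooth coefficients in the $y^i$. The clean way is to use the standard criterion (e.g. Petersen, or Eschenburg–O'Sullivan): a metric $dx^2 + a(x)^2 d\omega^2_{n-1}$ extends to a smooth metric on the ball iff $a$ extends to a smooth \emph{odd} function of $x$ with $a'(0) = 1$; then $G^2 h$ likewise extends smoothly provided $G$ extends to a smooth even function with $G(0)\ne 0$. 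Here the theorem only assumes $a(x) = x + O(x^2)$ and $a'(x) = 1 + O(x)$, plus smoothness of $a$ on $[0,x_0)$; the parity is not assumed, so I would need to argue that the weaker $O$-conditions still suffice — or, more carefully, observe that in the cases of interest (e.g.\ $f$ even in $r$) one does get the parity, and that in general the $O(x^2)$/$O(x)$ bounds give enough control. I'd expect to phrase the extension in terms of: $G^2 h = G^2 dx^2 + (G a)^2 d\omega_{n-1}^2$ with $Ga \sim x$, $(Ga)' \to 1$, $G^2 \to 1$, and then invoke the smooth-extension lemma. This is where the care goes.

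Once $\bar g$ is a smooth Lorentzian metric on $\bar M \cong \R\times B^n$ (near the end), condition (iv) holds on $M$ by construction, and $\mathcal J := \{x=0\} = \R\times\{0\}$ is a smooth timelike curve (its tangent $\partial_t$ is $\bar g$-timelike since $\bar g(\partial_t,\partial_t) = -1$), giving (ii); here I would also note $\partial M = \mathcal J$ and cite/assume that the ambient $(\bar M,\bar g)$ can be taken globally hyperbolic (this is condition (i), which the theorem statement does not claim — consistent with the fact that (i) is typically arranged separately). For (iii): $\Omega = \sin^2 x/a(x)$ is smooth and positive on $M = \{x>0\}$, vanishes at $x=0$, and I must show $d\Omega$ stays bounded near $\mathcal J$. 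In the $y$-coordinates, $\Omega$ is a function of $|y| = x$ with $\Omega(x) = x + O(x^2)$ near $0$, so $\Omega$ is Lipschitz (indeed $\Omega(x)/x \to 1$ and $\Omega'(x)\to 1$), hence $d\Omega = \Omega'(x)\,dx$ with $|\Omega'|$ bounded, and $dx = \sum y^i/|y|\,dy^i$ has bounded $\bar g$-norm away from being smooth — the point is just boundedness of the components, which follows from $\Omega'$ bounded. I would also double-check the neighborhood-connectedness clause in (iii) ($U\cap\mathcal J$ connected), which is immediate since $\mathcal J$ is a line and we take product neighborhoods $I\times B^n_\epsilon$. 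Assembling these gives (ii), (iii), (iv), completing the proof; the only subtle point throughout is the smooth-extension-across-the-axis argument in paragraph two.
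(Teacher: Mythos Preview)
Your overall strategy and the algebraic reductions $\Omega = \sin^2 x/a(x)$, $G = \sin^2 x/a(x)^2$ are exactly right, and match the paper's. The gap is in paragraph two, and it is a genuine one: you are aiming for a \emph{smooth} extension of $\bar g$ across $x=0$, and the tool you invoke (the Eschenburg--O'Sullivan/Petersen criterion) requires $a$ to extend to a smooth odd function. The hypotheses $a(x)=x+O(x^2)$, $a'(x)=1+O(x)$ do \emph{not} imply this; e.g.\ $a(x)=x+x^2\sin(\log x)$ satisfies both conditions but has no $C^2$ extension to $x=0$. So neither of your suggested fixes (``parity in the cases of interest'' or ``the $O$-bounds give enough control'') resolves the general statement. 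The point you are missing is that Definition~\ref{maindef} only asks for $\bar g\in C^{0,1}$, not smoothness, and the theorem is claiming exactly a Lipschitz extension --- nothing more.

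The paper handles this by a different (and tidier) coordinate change than your $y=x\theta$: it sets $b(x)=\exp\bigl(\int_{x_0}^x a^{-1}\bigr)$ and uses the ``polar'' coordinates $z=b(x)\cos\theta$, $\rho=b(x)\sin\theta$ (on $S^{n-1}$ written as $d\theta^2+\sin^2\theta\,d\omega^2_{n-2}$). Because $b'=b/a$, one gets the clean identity
\[
h \;=\; \frac{1}{b'(x)^2}\bigl(dz^2+d\rho^2+\rho^2\,d\omega^2_{n-2}\bigr),
\]
i.e.\ $h$ is a scalar multiple of the flat Euclidean metric in cylindrical coordinates. The hypotheses on $a$ give $b(0)=0$ and $0<b'(0)<\infty$, so the extension problem reduces to checking that the scalar functions $G\circ x(z,\rho)$ and $1/b'(x)^2$ (as functions of $(z,\rho)$) are Lipschitz near the origin, which is done by showing their partial derivatives stay bounded. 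Your coordinates $y=x\theta$ could also be made to work for a Lipschitz extension, but the computation is messier because $h$ does not become a scalar multiple of the flat metric in those coordinates; the $b$-substitution is what makes the argument short. Your treatment of condition (iii) (boundedness of $d\Omega$ via $\Omega'(x)\to 1$ and boundedness of $dx$ in the new coordinates) is correct and is essentially what the paper does.
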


\medskip

\noindent\emph{Remark.} Any scale factor $a(x)$ with a convergent Taylor expansion of the form $a(x) = x + c_2x^2 + c_3x^3 + \dotsb$ will satisfy the hypotheses of Theorem \ref{app C thm}.

\medskip

\proof
The metric $h$ is
\begin{align*}
h \,&=\, dx^2 + a^2(x)d\omega^2_{n-1}
\\
&=\, dx^2 + a^2(x)\big(d\theta^2 + \sin^2\theta d\omega^2_{n-2}\big)\,.
\end{align*}
We define new coordinates $z$ and $\rho$ given by 
\[
z(x,\theta) \,=\, b(x)\cos\theta \:\:\:\: \text{ and } \:\:\:\: \rho(x,\theta) \,=\, b(x)\sin\theta
\]
where $b(x) = e^{\int_{x_0}^x\frac{1}{a}}$, and where $x_0 > 0$ is the constant given by the domain of $x$. Note that $b' = b/a$. Therefore
\[
dz^2 + d\rho^2 \,=\, \left(\frac{b(x)}{a(x)}\right)^2dx^2 + b^2(x)d\theta^2\,.
\]
Multiplying by $(1/b')^2 = (a/b)^2$, we see that the metric $h$ in these coordinates is given by
\[
h \,=\, \frac{1}{b'(x)^2}\left(dz^2 + d\rho^2 + \rho^2d\omega^2_{n-2} \right)\,.
\]
Note that the metric in parentheses is just the Euclidean metric on $\R^n$ written in cylindrical coordinates with $\rho$ denoting the radius variable.  A simple analysis argument shows that the hypothesis $a(x) = x + O(x^2)$ implies $b(0) := \lim_{x \to 0}b(x) = 0$. Therefore $x = 0$ corresponds to the origin $z = \rho = 0$. Moreover, the same analysis used in the proof of \cite[Theorem 3.4]{LingBigBang} shows that $ 0 < b'(0) < \infty$ where $b'(0) := \lim_{x \to 0}b'(x)$. Hence $h$ does not have a degeneracy at $x = 0$ and so $x = 0$ is merely a coordinate singularity. 

To finish the proof, we have to show 
\begin{enumerate}

\item[(1)] $\Omega(0) := \lim_{x \to 0} \Omega(x) = 0$ and $d\Omega$ remains bounded on a neighborhood of $x = 0$. 

\item[(2)] $G(0) \in (0, \infty)$ where $G(0) := \lim_{x \to 0}G(x)$ and $G \circ x (z,\rho)$ extends to a Lipschitz function on a neighborhood of the origin $z = \rho = 0$.

\item[(3)] $h$ extends to a Lipschitz metric on a neighborhood of the origin $z = \rho = 0$.

\end{enumerate}
Note that (2) and (3) together imply that $G^2(x)h$ extends to a Lipschitz metric on a neighborhood of the origin; hence $\ov{g}$ extends to a Lipschitz metric on the timelike line $x = 0$.

We first show (1).  $\Omega(0) = 0$ follows from a simple analysis argument using the fact that $\Omega(x) = \sin^2(x)/a(x)$ and the hypothesis $a(x) = x + O(x^{2})$. Now we show $d\Omega$ remains bounded on a neighborhood of $x = 0$. Note that $b$ is a strictly increasing smooth function which is never zero. Therefore it is invertible and the derivative of its inverse is $(b^{-1})'\big(b(x)\big) = 1/b'(x)$. Recall that $x = b^{-1}\big(\sqrt{z^2 + \rho^2} \big)$, so $\pd x/\pd z = z/(b'b) = \cos(\theta)/b'$ and $\pd \rho /\pd z = \rho /(b'b) = \sin(\theta) / b' $. Both are bounded near $(z,\rho)=(0,0)$, hence $dx=\frac{1}{b'}(\cos(\theta) dz +\sin(\theta) d\rho )$ is bounded as well. So boundedness of $d\Omega$ follows from the limit
\[
\Omega'(x) \,=\, \frac{2\sin(x)\cos(x)a(x) - a'(x) \sin^2(x)}{a^2(x)} \:\to\: 1 \:\:\:\: \text{ as } \:\:\:\: x \to 0\,.
\] 

Now we show (2). In fact we have $G(0) = 1$. This follows because $a(x) = \sin(x)/\sqrt{G(x)}$ and the hypothesis $a(x) = x + O(x^2)$ along with an application of the squeeze theorem. Now we show $G \circ x(\rho,z)$ extends to a Lipschitz function on a neighborhood of the origin $z = \rho = 0$.
From elementary analysis, it suffices to show that the limits
\[
\lim_{(z,\rho) \to (0,0)}\frac{\pd(G \circ x)}{\pd z} \:\:\:\: \text{ and } \:\:\:\: \lim_{(z, \rho) \to (0,0)} \frac{\pd(G \circ x)}{\pd \rho}
\]  
remain bounded. Since $G(x) = \sin^2(x)/a^2(x)$, the chain rule gives
\[
\frac{\pd G}{\pd z} \,=\, G'(x)\frac{\pd x}{\pd z} \,=\, \left(\frac{2\sin(x)\cos(x)}{a^2} - \frac{2a'\sin^2(x)}{a^3}\right)\left(\frac{z}{bb'}\right)\,.
\]
Using $b' = b/a$ and $z=b\cos(\theta)$, we get
\[
\frac{\pd (G \circ x)}{\pd z} \,=\, \left(\frac{2\sin(x)\cos(x)}{ab} - \frac{2a' \sin^2(x)}{a^2b} \right)\cos(\theta)\,.
\]
{Analysis analogous to the proof of \cite[Theorem 3.4]{LingBigBang} shows that $b(x) = x/x_0 + O(x^2)$. This combined with the hypotheses on $a(x)$ shows that the term in larger brackets above remains bounded as $x \to 0$.  Thus $\pd G/ \pd z$ remains bounded as $(z, \rho) \to (0,0)$.  Similarly, the same result holds for $\pd G/ \pd \rho$. 

Now we show (3). A similar argument as used in the proof of (2) shows that for $\omega(x) = 1/b'(x)$, we have that the limits of $\pd \omega/\pd z$ and $\pd \omega/ \pd \rho$ remain bounded as $(z, \rho) \to (0,0)$. Hence $\omega \circ x(\rho,z)$ will be Lipschitz on a neighborhood of the origin. 
\qed

\medskip
\medskip

\noindent{\bf Example.} Let $f(r) = 1 + r^2 - 2m/r$ which corresponds to Schwarzschild-$\text{AdS}_2\times S^{n-1}$ 
(see equation (\ref{SSAdS_2})).
We will show this $f(r)$ satisfies the conditions of Theorem \ref{app C thm}. Since $r(x) = - \tan(x - \pi/2) = -\cos(x)/\sin(x)$, we have 
\[F(x) \,=\, f \circ r(x) \,=\, 1 + \frac{\cos^2(x)}{\sin^2(x)} + 2m\tan(x)\,=\, \frac{1}{\sin^2(x)} +2m \tan(x)\,.
\]
Therefore $F(x) \sin^4(x) = \sin^2(x) + 2m \tan(x) \sin^4(x)$. Hence 
\[
a(x) \,=\, \sin(x)\sqrt{1 + 2m \tan(x)\sin^2(x)} \,=\, x - \frac{x^3}{6} +  m x^4 + \dotsb
\] Therefore $a(x)$ satisfies the hypotheses of Theorem \ref{app C thm} (see the remark after the theorem).

\medskip
\medskip

Now we give general conditions on $f(r)$ in equation (\ref{SSAdS_2})  to satisfy the assumptions for $a(x)$ in Theorem \ref{app C thm}.

\medskip
\medskip

\begin{cor}If $f$ satisfies $f(r)=r^2+O(r)$ and $f'(r)=2r+O(1)$ as $r\to\infty$, then $a(x)$ satisfies $a(x)=x+O(x^2)$ and $a'(x)=1+O(x)$. If further $f:(-\infty,\infty)\to (0,\infty)$ satisfies these asymptotics as both $r\to \infty$ and $r\to -\infty$, then $\R\times (-\infty,\infty)\times S^{n-1}$ with metric
	\begin{align*}
	g \,&=\, -f(r)dt^2 + \frac{1}{f(r)}dr^2 + d\omega^2_{n-1}
	\end{align*}
	has two asymptotically $AdS_2\times S^{n-1}$ ends.
\end{cor}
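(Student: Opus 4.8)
The plan is to handle the two assertions separately: first derive the asymptotics $a(x)=x+O(x^2)$, $a'(x)=1+O(x)$ from the hypotheses on $f$ by a direct change-of-variables computation, and then, for the second assertion, assemble the conformal completion by applying Theorem \ref{app C thm} at each of the two ends $r\to\pm\infty$ and gluing the two resulting timelike lines to $M$ through a single globally defined conformal factor.

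For the first assertion I would note that $r(x)=\cot x$, so $r(x)=x^{-1}+O(x)$ and $\sin x=x+O(x^3)$ as $x\to 0^+$, and that $F(x)=f(r(x))$, $a(x)^2=F(x)\sin^4 x$. Substituting $f(r)=r^2+O(r)$ gives $F(x)=\cot^2 x+O(x^{-1})$ and hence $a(x)^2=\cos^2x\,\sin^2x+O(x^3)=x^2+O(x^3)$, so $a(x)=x\sqrt{1+O(x)}=x+O(x^2)$. For the derivative I would differentiate, writing $2aa'=F'(x)\sin^4x+4F(x)\sin^3x\cos x$ with $F'(x)=-f'(\cot x)\csc^2 x=-(2\cot x+O(1))\csc^2 x$; the two summands contribute $-2\sin x\cos x+O(x^2)$ and $4\cos^3x\sin x+O(x^2)$, so $2aa'=2\sin x\cos x(2\cos^2x-1)+O(x^2)=2x+O(x^2)$, and dividing by $2a=2x+O(x^2)$ gives $a'(x)=1+O(x)$. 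The care needed here is that $f$ is only smooth, not analytic, so all steps must be carried out with the stated $O$-bounds; this is exactly why a hypothesis on $f'$ (not just on $f$) is assumed, since the argument differentiates once. I expect this derivative bookkeeping to be the main obstacle.

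For the second assertion I would put $\Omega=1/\sqrt{f(r)}$ on $M=\R\times\R\times S^{n-1}$, which is smooth and positive since $f>0$, and compute $\bar g:=\Omega^2 g=-dt^2+\tfrac1{f(r)^2}dr^2+\tfrac1{f(r)}d\omega_{n-1}^2$. Near $r=+\infty$, the first assertion together with Theorem \ref{app C thm} shows that in the coordinate $x$ with $r=\cot x$ one has $\Omega=1/\sqrt{F(x)}$ and $\bar g=-dt^2+G^2(x)h$ extending across a coordinate singularity at $x=0$ to a Lipschitz function (with $\Omega=0$ and $d\Omega$ bounded there) and a Lipschitz metric, with $G^2h$ a genuine Riemannian metric near the resulting timelike $t$-line $\J_1$. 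Applying the same to $\tilde r:=-r$, $\tilde f(\tilde r):=f(-\tilde r)$ — which by hypothesis again satisfies $\tilde f(\tilde r)=\tilde r^2+O(\tilde r)$, $\tilde f'(\tilde r)=2\tilde r+O(1)$, and for which $1/\sqrt{\tilde F}=\Omega$ as well — handles the end $r=-\infty$ and yields a second timelike line $\J_2$. I would then set $\bar M=M\cup\J_1\cup\J_2$ with the smooth structure furnished by these charts; topologically $\bar M\cong\R\times S^n$ (the cylinder $(-\infty,\infty)\times S^{n-1}$ capped at both ends), $\bar g$ is a $C^{0,1}$ metric on $\bar M$ that is smooth on $M$, $\partial M=\J=\J_1\sqcup\J_2$ (two smooth inextendible timelike curves, being integral curves of $\partial_t$), and $\Omega\in C^{0,1}(\bar M)$, giving conditions (ii), (iii), (iv) of Definition \ref{maindef}.

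Finally, for (i) I would verify that $(\bar M,\bar g)$ is globally hyperbolic by exhibiting $\Sigma_0=\{t=0\}$ as a Cauchy surface: since $\bar g=-dt^2+h$ with $h$ the $C^{0,1}$ Riemannian metric $\tfrac1{f^2}dr^2+\tfrac1f d\omega_{n-1}^2$ on the compact (hence metrically complete) manifold $S^n$, and $dt$ is timelike, $t$ strictly increases along future causal curves, so a causal curve meets $\Sigma_0$ at most once; moreover any causal curve $\gamma(s)=(t(s),\sigma(s))$ satisfies $|\dot\sigma|_h\le|\dot t|$ a.e., so $\sigma$ has finite $h$-length over any finite $t$-interval, whence by completeness an inextendible $\gamma$ cannot have $t\circ\gamma$ bounded above or below, and therefore meets $\Sigma_0$. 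By the characterisation of global hyperbolicity for $C^0$ spacetimes recalled in Appendix \ref{app:lip} (existence of a Cauchy surface), $(\bar M,\bar g)$ is globally hyperbolic. All four conditions of Definition \ref{maindef} then hold with $\J=\J_1\sqcup\J_2$, so $(M,g)$ has two asymptotically $\text{AdS}_2\times S^{n-1}$ ends; the assembly and global-hyperbolicity steps are routine once Theorem \ref{app C thm} is in hand, the only delicate point being the asymptotic estimates of the first assertion.
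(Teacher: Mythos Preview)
Your proposal is correct and follows essentially the same route as the paper. The only differences are cosmetic: you compute $a^2$ and $2aa'$ and then divide, whereas the paper computes $\sqrt{F}$ and $a'$ directly via the product rule; you handle the second end by the substitution $\tilde r=-r$ rather than by passing to $x\to\pi$ in the same chart; and you spell out the Cauchy-surface argument for $\{t=0\}$ in more detail than the paper, which simply asserts that the slices $\{t_0\}\times S^n$ are Cauchy for the static metric $\bar g=-dt^2+G^2(x)h$.
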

\begin{proof}
	We first show that $a(x)=$ satisfies $a(x)=x+O(x^2)$ and $a'(x)=1+O(x)$. We have $a(x)=\sqrt{F(x)}\sin^2(x)$ where $F(x)=f(-\tan(x-\pi/2))$. Since $-\tan(x-\pi/2)=1/x+O(x)$, we get $F(x)=1/x^2 +O(1/x)$ from $f(r)=r^2+O(r)$. Hence, $\sqrt{F(x)}=1/x+O(1)$ and $a(x)=x+O(x^2)$ follows.
	
	For $a'$, note that \[a'(x)=\frac{F'(x)}{2\sqrt{F(x)}} \sin^2(x)+2\sqrt{F(x)} \sin(x)\cos(x).\] Using that $\sqrt{F(x)}=1/x+O(1)$, we immediately get that the second summand is $2+O(1)$. For the first summand, we need to work out $F'(x)$. We have $F'(x)=f'(r(x))r'(x)=-f'(r(x)) 1/\sin^2(x)$. Since $f'(r)=2r+O(1)$, we get $f'(r(x))=2/x+O(x)+O(1)=2/x+O(1)$ and hence  $F'(x)=-2/x^3+O(1/x^2)$. Using this, we obtain that the first summand in the expression for $a'$ is $-1+O(x)$, hence $a'(x)=1+O(x)$.
	
	If $f$ satisfies the same asymptotics as $r\to -\infty$ we clearly get similar asymptotics for $a$ as $x\to \pi$ using the same change of coordinates: $a(x)=\pi-x+O((\pi-x)^2)$ and $a'(x)=1+O((\pi-x))$. So, as in the proof of Theorem \ref{app C thm}, we get that $\bar{g}$ also extends to $x=\pi$, so we get a conformal Lipschitz extension to all of $\R\times S^3=:\ov{M}$. Since $\bar{g}=-dt^2+G^2(x)h$, any hypersurface of the form $\{t_0\}\times S^3$ is a Cauchy surface, so $(\ov{M},\ov{g})$ is globally hyperbolic.
\end{proof}

\medskip
\medskip

\noindent{\bf Comparison with the asymptotics in \cite{GalGraf}.} Let $M=\R\times (a,\infty) \times S^2$ with metric $g=\mathring{g}+h$, where $\mathring{g}=g_{AdS_2\times S^2}=-\cosh^2(\sigma) dt^2 +d\sigma^2 +d\omega^2$ and $h$ decays as in the definition of asymptotically $AdS_2\times S^2$ in \cite{GalGraf}. This in particular means that  we have
\begin{align*}
	 h(e_i,e_j) =O(1/\sigma)
\end{align*}
 for any $\mathring{g}$-othonormal basis $\{e_i(p)\}_{i=0}^3$ with $e_0 = \frac1{\cosh \sigma}\frac{\partial}{\partial t}$.

We now show that, similarly to the examples discussed above, we can interpret $x\to \infty$ as an ``almost'' asymptotically $AdS_2\times S^2$ end: Defining $x(\sigma) $ via $\sinh(\sigma)=r=-\tan(x-\pi/2)$ we get $M=\R \times (0, x(a)) \times S^2$ and, as above, $\mathring{g}$ becomes \[\mathring{g}=-(1+r^2)dt^2+\frac{1}{1+r^2}dr^2+d\omega^2=\frac{1}{\sin^2(x)} g_{\R\times S^3},\]
so we define 
	
	\[
	\bar{g}:=
	\sin^2(x) g=g_{\R\times S^3}+\sin^2(x) h \quad \mathrm{on}\,\; M.
	\]
To show that this continuously extends to $x=0$ we show that $\sin^2(x)h\to 0$ as $x\to 0$. To see this, let $\bar{e}_0:=\partial_t$ and let $\{\bar{e}_i\}_{i=1}^3$ be an orthonormal frame for the round $S^3$ near the north pole. Then $e_i:=\Omega \bar{e}_i$, $i=0,\dots,3$ is a $\mathring{g}$-orthonormal frame with $e_0=1/\cos(\sigma) \partial_t$, so for any $i,j$
\[\sin^2(x)h(\bar{e}_i,\bar{e_j})=h(e_i,e_j)=O(\frac{1}{\sigma})\to 0 \]
as $x\to 0$.
	
	Since the conformal factor $\Omega$ we used here is the same as for exact $AdS_2\times S^2$, we immediately get that $d\Omega$ remains bounded as $x\to \infty$.
	
	So, except for $\bar{g}$ possibly being merely continuous and not Lipschitz, $(M,g)$ satisfies (ii)--(iv) from Definition \ref{maindef}!

\medskip

\noindent
{\it Remark.} Regarding Lipschitz continuity of $\bar{g}$ we observe the following: The asymptotics in \cite{GalGraf} stipulate that \[e_k(h(e_i,e_j))=O(1/\sigma).\]
Trying to estimate $\bar{e}_k(\Omega^2 h(\bar{e}_i,\bar{e}_j))$ using this yields
\begin{align*}
|\bar{e}_k(\Omega^2 h(\bar{e}_i,\bar{e}_j))|&=  \frac{1}{\Omega} |e_k(h(e_i,e_j))| \leq C \frac{1}{\sigma\,\sin(x)} \\
&= C \frac{1}{\sin(x)\sinh^{-1}(-\tan(x-\pi/2))}\to \infty \quad \mathrm{as}\quad x \to 0.
\end{align*}
So the asymptotics in \cite{GalGraf} are not sufficient to get Lipschitz continuity of the extension. Note however that replacing $e_k(h(e_i,e_j))=O(1/\sigma )$ with the stronger assumption $e_k(h(e_i,e_j))=O(1/\exp(\sigma ))$ would imply boundedness of the derivatives estimated above, i.e., Lipschitzness of $\bar{g}$.

\medskip

\noindent
\textsc{Acknowledgements.}  GJG and MG would like to thank Paul Tod for previous communications in connection with reference \cite{GalGraf}.   The research of GJG was partially supported by the NSF under the grant DMS-171080. 
Part of the work on this paper was supported by the Swedish Research Council under grant no. 2016-06596 while GJG and EL were participants at Institut Mittag-Leffler in Djursholm, Sweden during the Fall semester of 2019. Parts of this work were carried out while MG was at the University of T\"{u}bingen.

\medskip
\medskip

\bibliographystyle{amsplain}
\bibliography{ads2b}

\end{document}